\definecolor{darkpastelgreen}{rgb}{0.01, 0.75, 0.24}
\algnewcommand\Input{\item[{\textbf{\quad Input:}}]}
\algnewcommand\Output{\item[{\textbf{\quad Output:}}]}
\algnewcommand\Data{\item[{\textbf{\quad Data:}}]}
\DeclareMathOperator{\argmin}{argmin }
\DeclareMathOperator{\Tr}{Tr}
\newcommand{\dd}{\mathrm{d}}
\newcommand{\ind}[1]{\mathds{1}_{#1}}
\newcommand{\norm}[1]{\|#1\|}
\newcommand{\cY}{\mathcal Y}
\newcommand{\cD}{\mathcal D}
\newcommand{\cC}{\mathcal C}
\newcommand{\cG}{\mathcal G}
\newcommand{\cN}{\mathcal N}
\newcommand{\cQ}{\mathcal Q}
\newcommand{\cU}{\mathcal U}
\newcommand{\cR}{\mathcal R}
\newcommand{\cH}{\mathcal H}
\newcommand{\cB}{\mathcal B}
\newcommand{\cS}{\mathcal S}
\newcommand{\cL}{\mathcal L}
\newcommand{\cF}{\mathcal F}
\newcommand{\R}{\mathds R}
\newcommand{\N}{\mathds N}
\newcommand{\E}{\mathds E}
\renewcommand{\P}{\mathds P}
\newcommand{\bSigma}{\textbf{$\Sigma$}}
\title[FLASH]{An efficient joint model for high dimensional longitudinal and survival data\\ via generic association features}
\author{Van Tuan Nguyen$^{1,2,*}\email{vantuan.nguyen@califrais.fr}$, Adeline Fermanian$^{2}$, Antoine Barbieri$^{3}$,\\
\bf{Sarah Zohar$^{4,5}$, Anne-Sophie Jannot$^{4,5,6}$,  Simon Bussy$^{1}$, and Agathe Guilloux$^{4,5}$}\\
$^{1}$LOPF, Califrais’ Machine Learning Lab, Paris, France, $^{2}$Laboratoire de Probabilités, Statistique et \\Modélisation, LPSM, Univ. Paris Cité, F-75005, Paris, France, $^{3}$Univ. Bordeaux, INSERM, U1219, France,\\ $^{4}$INSERM, Centre de recherche des Cordeliers, Univ. Sorbonne, Univ. Paris Cité, F-75006, Paris, France, \\$^{5}$HeKA, Inria Paris, F-75015, Paris, France, $^{6}$French National Rare Disease Registry (BNDMR), \\Greater Paris University Hospitals (AP-HP), Paris, France.}
\begin{document}





\pagerange{\pageref{firstpage}--\pageref{lastpage}} 




\label{firstpage}


\begin{abstract}
This paper introduces a prognostic method called FLASH that addresses the problem of joint modelling of longitudinal data and censored durations when a large number of both longitudinal and time-independent features are available.  In the literature, standard joint models are either of the shared random effect or joint latent class type. Combining ideas from both worlds and using appropriate regularisation techniques, we define a new model with the ability to automatically identify significant prognostic longitudinal features in a high-dimensional context, which is of increasing importance in many areas such as personalised medicine or churn prediction. We develop an estimation methodology based on the EM algorithm and provide an efficient implementation. The statistical performance of the method is demonstrated both in extensive Monte Carlo simulation studies and on publicly available medical datasets.
Our method significantly outperforms the state-of-the-art joint models in terms of C-index in a so-called ``real-time'' prediction setting, with a computational speed that is orders of magnitude faster than competing methods. In addition, our model automatically identifies significant features that are relevant from a practical point of view, making it interpretable, which is of the greatest importance for a prognostic algorithm in healthcare.
\end{abstract}

%

\begin{keywords}
 Joint models; Longitudinal data; High-dimensional statistics; Survival analysis
\end{keywords}


\maketitle


%

\section{Introduction}
\label{introduction}

In healthcare, it is increasingly common to record the values of longitudinal features (e.g. biomarkers such as heart rate or haemoglobin level) up to the occurrence of an event of interest for a subject, such as rehospitalisation, relapse, or disease progression. Moreover, in large observational databases such as claims databases, with electronic health records, the amount of recorded data per patient is often very large and growing over time.
However, there is currently no tool that can simultaneously process high-dimensional longitudinal signals and perform real-time predictions, i.e. give predictions at any time after only one estimation/training step. While landmark approaches~\citep[see, e.g.,][]{devaux2022individual} can handle high-dimensional longitudinal features, they require a separate training for each prediction time. An alternative is to use ``joint modeling'' to handle longitudinal and survival outcomes together.

The latter has received considerable attention in the last two decades~\citep{tsiatis2004joint, rizopoulos2011bayesian, hickey2016joint}. More specifically, it consists of defining $(i)$ a time-to-event model, $(ii)$ a longitudinal marker model, and $(iii)$ linking both models via a common latent structure. Numerical studies suggest that these approaches are among the most satisfactory for incorporating all longitudinal information into a survival model~\citep{yu2004joint}, and are better than landmark approaches, which use only information from individuals at risk at the landmark time~\citep[see, e.g.,][]{devaux2022individual}. 
They have the additional advantage of making more efficient use of the data as information on survival is also used to model the longitudinal markers. More importantly, they require only one training, regardless of the number of prediction times.

There are two main approaches to linking longitudinal and survival models. On the one hand, in shared parameter joint models (SREMs), characteristics of the longitudinal marker, typically some random effects learned in a linear mixed model, are included as covariates in the survival model~\citep{wulfsohn1997joint, andrinopoulou2016bayesian}.
On the other hand, joint latent class models (JLCMs), inspired by mixture-of-experts modelling~\citep{masoudnia2014mixture}, assume that the dependence between the time-to-event and the longitudinal marker is fully captured by a latent class structure \citep{lin2002latent, proust2014joint}, which amounts to assuming that the population is heterogeneous and that there are homogeneous latent classes that share the same marker trajectories and prognosis. JLCMs offer a computationally attractive alternative to SREMs, especially in a high-dimensional context. These two models are illustrated in Figure \ref{fig:graphical_srem_jlcm}.

\begin{figure}[!htb]
\centering
\resizebox{.45\textwidth}{!}{
\begin{tikzpicture}
\tikzstyle{main}=[circle, minimum size = 18mm, ultra thick, draw =black!80, node distance = 13mm,text centered]
\tikzstyle{connect}=[-latex, thick]
\tikzstyle{box}=[rectangle, draw=black!30]
\begin{scope}
  \node[main, fill = white!30] (T) [] {$\bm{T}$};
  \node[main, fill = white!30] (Y) [below left=of T] {$\bm{Y}$};
  \node[main, fill = white!30] (X) [above left=of T] {$\bm{X}$};
  
  \path [->,draw,ultra thick]
        (X) edge (T)
        (Y) edge (T);
\end{scope}

\begin{scope}[xshift=3cm]
  \node[main, fill = white!30] (X) [] {$\bm{X}$};
  \node[main, fill = black!30] (G) [right=of X] {$\bm{G}$};
  \node[main, fill = white!30] (Y) [above right=of G] {$\bm{Y}$};
  \node[main, fill = white!30] (T) [below right=of G] {$\bm{T}$};
  
  \path [->,draw,ultra thick]
        (X) edge (G)
        (G) edge (T)
        (G) edge (Y);
\end{scope}   
\end{tikzpicture}
}
\caption{Graphical representation of SREMs (left) and JLCMs (right). The variable $X$ represents time-independent features, $Y$ the longitudinal markers, $T$ the time-to-event, and $G$ the latent class membership.}
\label{fig:graphical_srem_jlcm}
\end{figure}
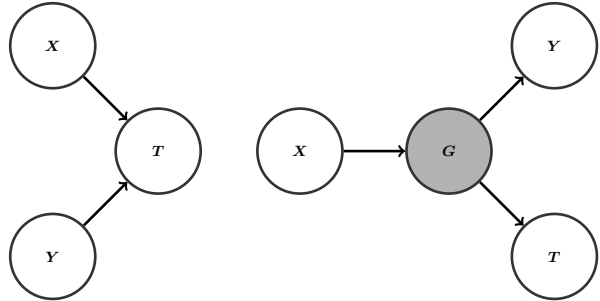

Unfortunately, joint models have predominantly focused on univariate longitudinal markers \citep{andrinopoulou2020integrating}. To adapt such models to a multivariate setting, the common approach is to fit multiple univariate joint models separately to each longitudinal marker~\citep{wang2012joint}, which does not account for interactions between longitudinal markers~\citep{jaffa2014joint, kang2022consistent,lin2002maximum}. Furthermore, issues arising from the high-dimensional context---e.g. computational power, limits of numerical estimation---have, to our knowledge, never been considered in the analyses, and the number of longitudinal markers considered in numerical studies remains very low, typically up to $5$ \citep{hickey2016joint,murray2022fast,rustand2024fast}. 

The aim of this article is to propose a new joint model called FLASH (Fast joint model for Longitudinal And
Survival data in High dimension), together with an efficient inference methodology. The model is inspired by both JLCMs and SREMs, but is designed to scale to high-dimensional longitudinal markers. The general idea is to use generic features extracted from the longitudinal markers directly in the survival model and to use regularization in an Expectation-Maximization (EM) algorithm. The main difference with SREMs is that these features, called association features, are assumed to be independent of the modeling assumptions in the longitudinal submodel.
As a result, the model is very efficient to train---the likelihood is closed-form in a Gaussian setting and does not require Monte Carlo approximations, as is often the case with SREMs, making it suitable for high-dimensional longitudinal markers. Moreover, it allows the use of high-dimensional and generic feature extraction functions that characterize longitudinal markers, rather than just the random effects, resulting in a model that is generic enough to be adapted to different types of longitudinal markers and prior information on the problem. The use of elastic net and sparse group lasso regularization enables automatic selection of relevant association features, resulting in an interpretable model that retains only the significant longitudinal markers

Finally, our model allows us to define a ``real-time'' prediction methodology where, once the parameters of the model have been learned, we can compute a predictive marker that, given only the time-independent and longitudinal features up to a given point in time, outputs a risk for each subject at that point in time. It differs from traditional approaches \citep{proust2014joint} that require knowledge of the survival labels, which are unknown in the ``real-time'' prediction setting.

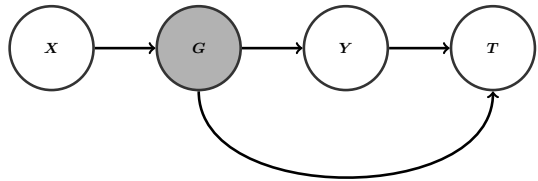
\begin{figure}[!htb]
 \centering
\resizebox{.4\textwidth}{!}{
\begin{tikzpicture}
\tikzstyle{main}=[circle, minimum size = 18mm, ultra thick, draw =black!80, node distance = 13mm,text centered]
\tikzstyle{connect}=[-latex, thick]
\tikzstyle{box}=[rectangle, draw=black!30]
  \node[main, fill = white!30] (X) [] {$\bm{X}$};
  \node[main, fill = black!30] (G) [right=of X] {$\bm{G}$};
  \node[main, fill = white!30] (Y) [right=of G] {$\bm{Y}$};
  \node[main, fill = white!30] (T) [right=of Y] {$\bm{T}$};
  
  \path [->,draw,ultra thick]
        (X) edge (G)
        (G) edge (Y)
        (Y) edge (T)
        (G) edge[bend right=90] (T);
\end{tikzpicture}}
\caption{Graphical representation of the FLASH model. The variable $X$ represents time-independent features, $Y$ the longitudinal markers, $T$ the time-to-event and $G$ the latent class membership of an individual.}
\label{fig:graphical_flash}
\end{figure}

In summary, we introduce a new method for predicting survival risk with high-dimensional longitudinal features that is both interpretable and computationally efficient, thus providing a powerful tool for real-time clinical decision making, for example in patient monitoring.

A precise description of the model is given in Section~\ref{sec:method}. In
Section~\ref{sec:inference}, we present our assumptions and inference methodology based on maximizing the likelikood with a regularized variant of the EM algorithm. Section~\ref{sec:evaluation_methodology} introduces our evaluation methodology and the competing methods. In Section~\ref{sec:experiment}, we apply our method to datasets from two simulation studies and three publicly available medical datasets (PBCseq, AIDS, Sepsis). We show that FLASH outperforms the competitors and selects relevant features from a medical perspective. Finally, we discuss the obtained results in Section~\ref{sec:conclusion}. The code to implement the model and reproduce the experiments is publicly available at \url{https://github.com/Califrais/flash.git}.
\section{Model}
\label{sec:method}

In this section we describe the FLASH model, which consists of three sub-models: a multinomial logistic regression defining the probability of belonging to a latent class, a generalized linear mixed model for each latent class describing the evolution of the longitudinal markers, and finally a Cox class-specific survival model. In all the following, we consider a set of $n$ independent and identically distributed (i.i.d.) subjects. For each subject $i \in \{1, \dots, n \}$ we are given some longitudinal markers $Y_i$, time-independent features $X_i \in \R^p$, a right-censored time-to-event $T_i \in \R^+$, and a censoring variable $\Delta_i \in \{0, 1\}$. 

\subsection{Latent class membership } \label{subsec:latent_class}
We assume that the population consists of $K \in \N^\ast$ latent groups. To each subject $i \in \{1, \ldots, n\}$, we associate a categorical latent variable $g_i \in \{1, \dots, K\}$, which encodes its latent class membership. Then, denoting by $X_i \in \R^p$ the $p$-dimensional vector of time-independent features, the latent class membership probability is assumed to take the multinomial logistic regression form~\citep{bohning1992multinomial}, for any $k \in \{1, \dots, K\}$,
\begin{equation}
  \label{eq:pi}
  \P(g_i=k) = \dfrac{e^{X_i^\top\xi_k}}{\sum_{j=1}^{K}e^{X_i^\top\xi_j}},
\end{equation}
where $\xi_k \in \R^p$ denotes a vector of coefficients for class $k$. This submodel is similar to JLCMs or the C-mix model~\citep{bussy2019c} and assumes that latent-class membership depends only on time-independent features, with the vector $\xi_k$ quantifying the effect of each time-independent feature in $X_i$ on the probability that subject $i$ belongs to the $k$-th latent class.
The optimal number of latent classes $K$ can be selected with the Bayesian information criterion (BIC)~\citep{hastie2009elements}, see Section~\ref{sec:K_sel} of the Supplementary Materials for more details. Note that modeling the probability of latent class membership using multinomial logistic regression is a common assumption in joint models~\citep{lin2002latent, proust2014joint}. However, other models such as hidden Markov models can also be used to represent this probability~\citep{bartolucci2015discrete, bartolucci2019shared}. We assess the sensitivity to this choice by some experiments in Section~\ref{sec:sensity} of the Supplementary Materials.


\subsection{Class-specific longitudinal model} \label{subsec:longitudinal}

For each subject $i \in \{1, \ldots, n\}$, we are given $L \in \N^\ast$ longitudinal markers. We let, for any $\ell \in  \{1, \ldots, L \}$, $Y_i^\ell=\Big( y_i^\ell\big(t_{i1}^\ell\big), \ldots, y_i^\ell \big(t_{i n_i^\ell}^\ell\big) \Big)^\top \in \R^{n_i^\ell}$ be the vector of repeated measures of a theoretical longitudinal marker $y_i^\ell$ at observation times (or follow-up visits) $0 \leq t_{i1}^\ell <  \cdots < t_{in_i^\ell}^\ell$. Note that the observation times $t_{ij}^\ell$, $j=1, \dots, n_i^\ell$, can differ between subjects as well as between longitudinal markers, which makes the assumptions on the sampling mechanism very weak. In particular, this setting encapsulates many scenarios considered as missing data, where one individual is not measured while another one is, or where the longitudinal maker of one individual is missing, which here both simply correspond to removing some timestamps from the corresponding grids.

We assume a class-specific generalized linear mixed model (GLMM) for each longitudinal marker, which is a classical model for longitudinal data~\citep{fitzmaurice2012applied,hickey2016joint}. The GLMM is chosen according to the nature of the markers: Gaussian linear model for continuous markers, logistic regression for a binary marker, Poisson regression for counts. For the continuous markers, given a latent class $g_i=k$, for the $\ell$-th marker at time $t \in \{t_{i1}^\ell, \dots, t_{in_i^\ell}^\ell\}$, we then have
\begin{equation} \label{eq:gaussian_model_Y_i_ell}
    y_i^{\ell}(t^{\ell}_{ij}) \, | \,  b_i^\ell, \, g_i=k \sim \mathcal{N}(m_{ik}^{\ell}(t^{\ell}_{ij}), \phi_\ell),
\end{equation}
where the variance $\phi_\ell \in \R^+$ is an estimated parameter and the mean $m_{ik}^{\ell}$ is defined by
\begin{equation*}
  m_{i k}^\ell(t) = u^\ell(t)^\top\beta_k^\ell + v^\ell(t)^\top b_i^\ell, 
 \end{equation*}
 where $u^\ell(t) \in \R^{q_\ell}$ is a row vector of time-varying features with corresponding unknown fixed effect parameters $\beta_k^\ell\in \R^{q_\ell} $, and $v^\ell(t) \in \R^{r_\ell}$ is a row vector of time-varying features with corresponding random effect $b_i^\ell \in \R^{r_\ell}$.  
 Flexible representations for $u^\ell(t)$ can be considered using a vector of time monomials $u^\ell(t) = (1, t, t^2, \ldots, t^\alpha)^\top,$ with $\alpha \in \N^+$. We use $\alpha=1$ in all our experiments but higher orders could be used. We also let $v^\ell(t) = (1, t)^\top$.

Classically, the random effects component is assumed to follow a zero-mean multivariate normal distribution, that is, $b_i^\ell \sim \cN(0, D^{\ell \ell})$ with $D^{\ell \ell} \in \R^{r_\ell \times r_\ell}$ the variance-covariance matrix. To account for the dependence between the different longitudinal markers, we let $\text{Cov}[b_i^\ell,b_i^{\ell'}] = D^{\ell \ell'}$ for $\ell \ne \ell'$, where $\text{Cov}[\cdot, \cdot]$ denotes the covariance matrix of two random vectors, and we denote by $D = (D^{\ell \ell'})_{1 \leq \ell, \ell' \leq L}$
the global variance-covariance matrix which is common to all latent classes. 
Note that this variance–covariance matrix $D$ can be easily extended to be class-specific. We assume that all dependencies between longitudinal markers are encapsulated in this matrix $D$, which is summarized in the following assumption.
\begin{assumption}
\label{indep-hyp-1}
 For any $\ell \in \{1, \ldots, L\}$ and any $i \in \{1, \dots, n\}$, the longitudinal markers $Y_i^\ell$ are pairwise independent conditionally on $b^\ell_i$ and $g_i$.
\end{assumption} 
This is a standard modelling assumption in joint models~\citep[see, e.g.,][]{tsiatis2004joint}. Then, if we concatenate all longitudinal measurements and random effects of subject $i$ in, respectively, $Y_i = \big(Y_i^{1\top} \cdots \, Y_i^{L\top} \big)^\top \in \R^{n_i} \quad \text{and} \quad b_i = \big({b_i^1}^\top \cdots \, b_i^{L \top} \big)^\top \in \R^r$
with $n_i = \sum_{\ell=1}^L n_i^\ell$ and $r = \sum_{\ell=1}^L r_\ell$, a consequence of Assumption~\ref{indep-hyp-1} and Equation \eqref{eq:gaussian_model_Y_i_ell} is that
\begin{equation}
    \label{eq:Y_i_density}
  Y_i \, | \, b_i, g_i = k \sim \cN\big(M_{ik},\Sigma_i\big),
\end{equation}
where $M_{ik} = \big(m_{ik}^{1}(t^{1}_{i1}), \ldots, m_{ik}^{1}(t^{1}_{in_i^1}) ,\ldots, m_{ik}^{L}(t^{L}_{i1}), \ldots, m_{ik}^{L}(t^{L}_{in_i^L})\big)^\top \in \R^{n_i}$ and $\Sigma_i$ is the diagonal matrix whose diagonal is $(\phi_1{\textbf{1}_{n_i^1}}^\top, \ldots, \phi_L {\textbf{1}_{n_i^L}}^\top)^\top \in \R^{n_i}$ where $\mathbf{1}_m$ denotes the vector of $\R^m$ having all coordinates equal to one.

To extend the GLMM to other types of longitudinal markers (for example, binary or count data), the key point is that the distributions of $b_i$ and of $Y_i \, | \, b_i, g_i = k$ should be conjugate, so that the likelihood is tractable. It is possible to extend the model to non-conjugate distributions via numerical integration methods but this is not immediate. A detailed discussion of this is given in Section~\ref{sec:E-step} of the Supplementary Materials. For simplicity, we restrict ourselves to the Gaussian case in this article.

\subsection{Class-specific Cox survival model} \label{subsec:survival}

We place ourselves in a classical survival analysis framework. Let the non-negative random variables $T_i^\star$ and $C_i$ be the time to the event of interest and the censoring time, respectively. We then define the observed time $T_i = T_i^\star \wedge C_i$ and censoring indicator $\Delta_i = \ind{\{T_i^\star \leq C_i\}},$ where $a \wedge b$ denotes the minimum between two real numbers $a$ and $b$, and $\ind{\{\cdot\}}$ is the indicator function which takes the value $1$ if the condition in $\{\cdot\}$ is satisfied, and $0$ otherwise. We denote by $\cY_i^\ell(t^{-}) = \big(y_i^\ell(t_{i1}^\ell), \ldots, y_i^\ell(t_{iu}^\ell)\big)_{0 \leq t_{iu}^\ell < t}$
the subset of $Y_i^\ell$ formed from observations up to time $t$ and by $\cY_i(t^{-})$ the concatenation of the history of all observed longitudinal markers up to $t$.
Then we consider $M \in \N^+$ user-defined feature extraction functions $\Psi_m: \cY_i^\ell(t^{-}) \to \Psi_m(\cY_i^\ell(t^{-})) \in \R$, $m \in \{1, \dots, M \}$, which characterise the longitudinal markers. The set of features $\big(\Psi_m(\cY_i^\ell(t^{-}))\big)_{1 \leq m \leq M}$ should be rich enough to capture all dependencies between longitudinal markers and time-to-event, and is discussed in more detail below. To quantify the effect of the longitudinal markers on time-to-event, we then use an extension of the Cox relative risk model~\citep{Cox1972JRSS}, which allows time-varying covariates and was firstly introduced in~\citet{andersen1982cox}. Note that this model does not fulfill the classical Cox model's assumption of a constant proportional hazard over time. The hazard function in this model takes the form
\begin{equation}
  \label{eq:intensity-model}
  \lambda(t \, | \, \cY_i(t^{-}), g_i = k) = \lambda_0(t) \exp \Big(\sum_{\ell=1}^L \sum_{m=1}^M \Psi_m \big(\cY_i^\ell(t^{-}) \big){\gamma_{k,m}^\ell} \Big) = \lambda_0(t) \exp \big(\psi_i(t)^\top \gamma_k\big),
\end{equation}
where $\lambda_0$ is an unspecified baseline hazard function that does not depend on $k$, $\gamma_{k,m}^\ell \in \R$ the joint association parameters, which are the only class-specific objects in this model. We concatenate them in $\gamma_k = (\gamma_{k,1}^1, \ldots, \gamma_{k,M}^1, \ldots, \gamma_{k,1}^L, \ldots, \gamma_{k,M}^L )^\top \in \R^{LM}$ and define $\psi_i(t) = \Big(\Psi_{1}\big(\cY_i^1(t^{-}) \big), \ldots, \Psi_M\big(\cY_i^1(t^{-}) \big), \ldots, \Psi_{1}\big(\cY_i^L(t^{-}) \big), \ldots, \Psi_M\big(\cY_i^L(t^{-}) \big) \Big)^\top \in \R^{LM}$.

This model can be viewed as a generalization of SREMs~\citep{lin2002maximum, rizopoulos2011bayesian}, which have hazard functions of the form $\lambda_0(t) \exp \big(\sum_{\ell=1}^{L} \phi(b^\ell_i, t)^\top \gamma^\ell \big)$, where the association between the longitudinal and survival models is captured by the random effects $b^\ell_i$. The key idea of our model lies in the following assumption.
\begin{assumption}
\label{indep-hyp-2}
    For any time $t\geq0$, the hazard rate at time $t$ conditionally on the history of the longitudinal markers up to $t^{-}$ and $g_i$ is independent of $b_i$.
\end{assumption}
Any feature extraction function $\Psi_m$ of the longitudinal markers can be considered in the hazard function. Simple examples of such functions are the maximum or the sum (or the mean) of the longitudinal features, respectively defined by
\begin{align*}
    \Psi_m(\cY_i^\ell(t^{-})) &= \max_{j : t_{ij}^\ell < t}\big\{y_i^\ell(t_{i1}^\ell), \ldots, y_i^\ell(t_{ij}^\ell) \big\}, \quad \Psi_m(\cY_i^\ell(t^{-})) = \sum_{j :t_{ij}^\ell < t } y_i^\ell(t_{ij}^\ell).
\end{align*}
For illustration, Figure~\ref{fig:example_asso} shows how these association features change with time.
\begin{figure}
    \centering
    \includegraphics[scale=.2]{./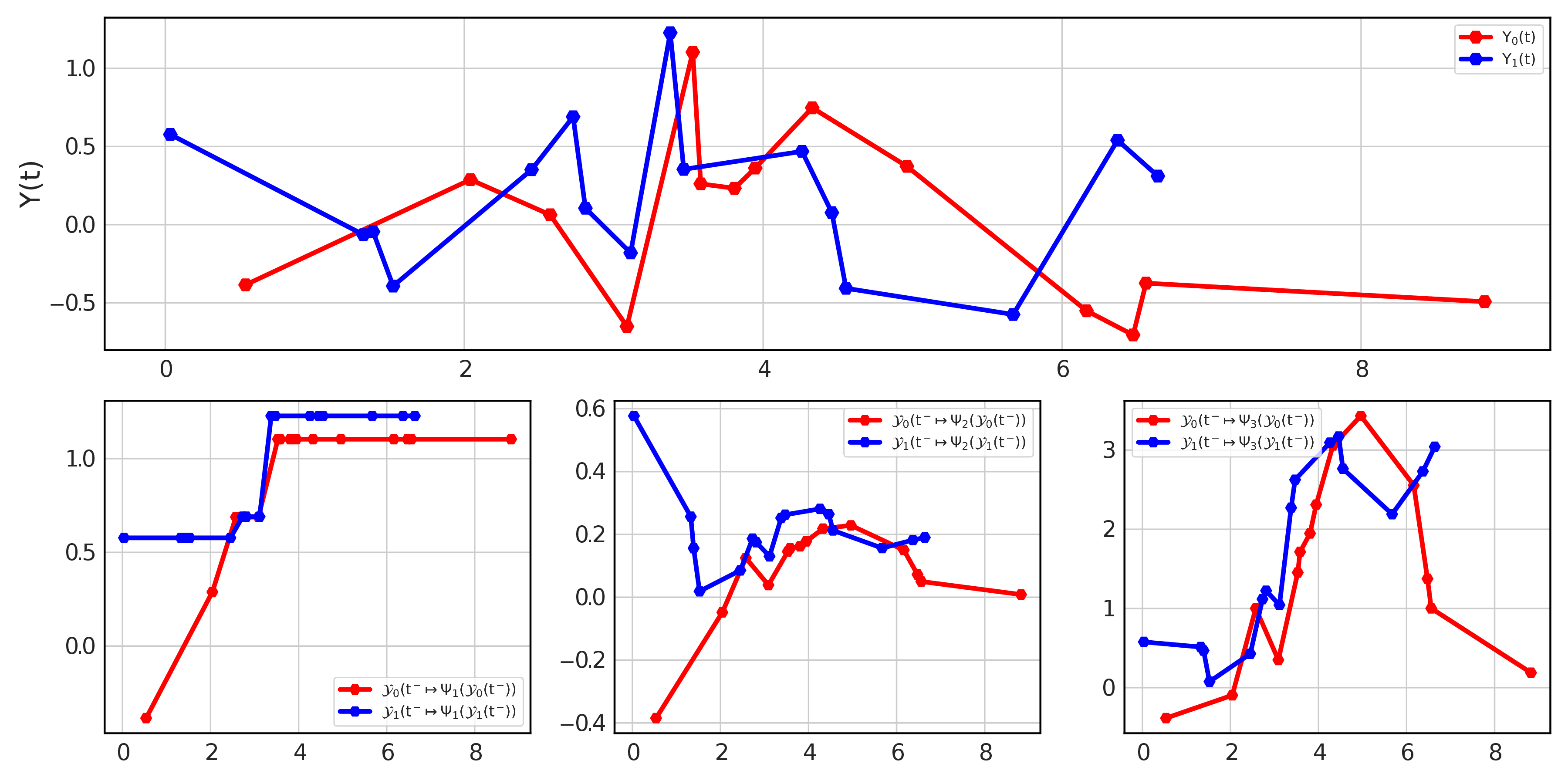}
    \caption{On the \textbf{top}, observed longitudinal markers of 2 individuals from the FLASH\_simu dataset. On the \textbf{bottom}, association features applied on the observed longitudinal markers: the maximum (\textbf{left}), the mean (\textbf{center}), and the sum (\textbf{right}).}
    \label{fig:example_asso}
\end{figure}
In practice, our rationale is to use a variety of feature extraction functions $\Psi_m$, such as absolute energy over time, statistics on autocorrelation, or Fourier and wavelet basis projections, and then perform feature selection via regularisation to learn which ones are predictive for the underlying task. This will be described in more detail in Section~\ref{sec:penalized-obj}.  Note that a crucial aspect of this model is that the extracted vector, also called extracted features or association features, $\psi_i(t)$, does not depend on the modelling assumptions in the longitudinal submodel of Subsection \ref{subsec:longitudinal} --- that is, does not depend on $b_i$ other than through the history $\cY_i(t^{-})$.

The FLASH model is summarised in Figure \ref{fig:graphical_flash} which shows that our model is a combination of SREMs and JLCMs where both random effects and latent classes account for the dependence between longitudinal markers and time-to-event.

\section{Inference}
\label{sec:inference}

Now that we have introduced all the components of our model, in this section we derive the form of its likelihood, present the regularisation strategy that deals with the high dimensionality of the data, and finally present our variant of the EM algorithm used to minimise the penalised negative log-likelihood.


\subsection{Likelihood}

Consider a training cohort of $n$ i.i.d. subjects 
$\cD_n = \big( (X_1, Y_1, T_1, \Delta_1), \ldots, (X_n, Y_n, T_n, \Delta_n) \big).$
For simplicity, we slightly abuse notation and use the same notation $f^\star$ for the true (joint) density or probability mass function of the various random variables in our model. Similarly, we denote by $f_\theta$ the candidates for estimating the densities $f^\star$ that satisfy the model assumptions of Section~\ref{sec:method}, where we have concatenated in $\theta$ all $P \in \N^+$ unknown parameters:
\[ \theta = \big(\xi_1^\top, \ldots, \xi_{K}^\top, \beta_1^\top, \ldots, \beta_{K}^\top, \phi^\top, D, \lambda_0(\tau_1), \ldots, \lambda_0(\tau_J), \gamma_1^\top, \ldots \gamma_{K}^\top\big)^\top \in \R^P ,\]
where $\beta_k = ({\beta_k^1}^\top \dots {\beta_k^\ell}^\top)^\top \in \R^q$ with $q = \sum_{\ell=1}^L q_\ell$ for any $k \in \{1, \dots, K\}$ , $\phi = (\phi_1, \ldots, \phi_L)^\top$ and where we use the vectorization of the matrix $D$ although this is not written explicitly. Note that we classically \citep[see, e.g.,][]{klein1992semiparametric} estimate $\lambda_0$ by a function taking mass at each failure time $\tau_j \in (\tau_1, \ldots, \tau_J)$, where $(\tau_1, \ldots, \tau_J)$ denote the $J \in \N^+$ unique failure times (obtained from $(T_1, \dots, T_n)$ removing the duplicates and keeping only the uncensored times $T_i$ for which $\Delta_i=1$). In this way, the estimation of the function $\lambda_0$ amounts to the estimation of the vector $\big(\lambda_0(\tau_1), \dots, \lambda_0(\tau_J)\big)$.

First, conditioning on the latent classes, we have
\begin{align*}
      f^\star(T_i, \Delta_i, Y_i) = \sum_{k=1}^K f^\star(g_i=k) f^\star(T_i, \Delta_i |Y_i, g_i = k) f^\star(Y_i |g_i = k).
\end{align*}
This yields the negative log-likelihood
\begin{align}
\label{eq:log-lik}
 \cL_n(\theta) = -n^{-1} \sum_{i=1}^n \log \sum_{k=1}^{K} f_\theta(g_i = k) f_\theta(T_i, \Delta_i \, | \, Y_i, g_i = k) f_\theta(Y_i \, | \, g_i = k).
\end{align}
Assuming that both the censoring mechanism and the stochastic mechanism generating the observation times of the longitudinal markers are non-informative~\citep{rizopoulos2011bayesian}, the joint density of $(T_i, \Delta_i)$ can be factorized into a part depending on the distribution of $T_i^\star$ and a part depending on that of $C_i$, so that
\begin{align}\label{eqn:censoring}
f^\star(T_i, \Delta_i| Y_i, g_i = k) &\propto f^\star(T_i| Y_i, g_i = k)^{\Delta_i} S^\star(T_i| Y_i, g_i = k)^{1-\Delta_i}\nonumber\\&=   
\lambda^\star(T_i| Y_i, g_i = k)^{\Delta_i} S^\star(T_i| Y_i, g_i = k),
\end{align}
where $S^\star$ and $\lambda^\star$ are the survival and hazard function associated with the density $f^\star$ of $T_i^\star.$

Under the assumptions given in the previous subsections, all terms in \eqref{eq:log-lik} can be computed in closed form. Indeed, $f_\theta(g_i = k)$ is given by \eqref{eq:pi} and the density function $f_\theta(Y_i | g_i = k)$ can be derived from  the distribution of $b_i$ and \eqref{eq:Y_i_density} (detailed calculations are given in Section~\ref{sec:ext-EM_sup} of the Supplementary Materials). Furthermore, following Equation~\eqref{eqn:censoring}, we have
$f_\theta(T_i, \Delta_i| Y_i, g_i = k) \propto \lambda \big(T_i\,| \, \cY_i(T_i^-), g_i = k\big)^{\Delta_i} S_k(T_i)$,
where $S_k(t) = \exp \Big(-\int_0^{t} \lambda\big(s \, | \, \cY_i(s^-), g_i = k\big) \dd s \Big)$
is the survival function of subject $i$ given that it belongs to latent class $k$. Since the baseline hazard function $\lambda_0$ takes mass only at each failure time $\tau_j \in (\tau_1, \ldots, \tau_J)$ then the integration over the survival process $S_k(t)$ is simply a finite sum over the process evaluated at the $J$ failure times. Then, we rewrite the function $S_k$, for any $t \geq 0$, as
\begin{equation*}
S_k(t) =\exp \Big(-\sum_{j=1}^J \lambda (\tau_j) \ind{\{\tau_j \leq t\}}\Big) = \exp \Big(-\sum_{j=1}^J \lambda_0(\tau_j) \exp \big(\psi_i(\tau_j)^\top \gamma_k\big) \ind{\{\tau_j \leq t\}}\Big).
\end{equation*}
The fact that $f_\theta(T_i, \Delta_i| Y_i, g_i = k)$ is closed-form is one of the major advantages of our model over standard SREMs. Indeed, computing this density in SREMs usually requires integrating it with respect to the distribution of the random effects $b_i$, leading to intractable integrals in the log-likelihood function. These integrals are typically estimated using Monte Carlo techniques~\citep{hickey2018joinerml}, which are computationally intensive and require additional assumptions on the allowed association functions $\psi_i$. These approaches usually do not scale in a high-dimensional context. 

To minimize \eqref{eq:log-lik} with respect to $\theta$, we use the EM algorithm, which is the common choice in the literature~\citep{wulfsohn1997joint, lin2002maximum}. This requires deriving what we call the negative ``complete'' log-likelihood, that is, an estimation of the joint density $f^\ast(T_i, \Delta_i, Y_i, b_i, g_i)$, where the random effect $b_i$ and the latent class $g_i$ are not observed. To this end, we need the following independence assumption.

\begin{assumption}
\label{assumption3}
For any $i \in \{1, \dots, n \}$ and $\ell \in \{1, \ldots, L\}$, the random effects $b^\ell_i$ are independent of the latent class membership $g_i$, and remain independent of it conditionally on 
$T_i$, $\Delta_i$, and $Y_i$.
\end{assumption}

This assumption states that subject-and-longitudinal marker specific random effects $b_i^\ell$ do not depend on the latent class membership.
Then, we have 
\begin{align*}
      f^\star(T_i, \Delta_i, Y_i, b_i, g_i) & =  f^\star(b_i, g_i)f^\star( Y_i | b_i, g_i) f^\star(T_i, \Delta_i | Y_i, b_i, g_i)  \\
    & =f^\star(b_i, g_i)  f^\star( Y_i | b_i, g_i )f^\star(T_i, \Delta_i | Y_i, g_i)
      &(\text{by Assumption \ref{indep-hyp-2}}) \\
      &=f^\star(b_i) f^\star(g_i) f^\star( Y_i | b_i, g_i)f^\star(T_i, \Delta_i | Y_i, g_i).
       & (\text{by Assumption \ref{assumption3}})
\end{align*}
The negative complete log-likelihood is then given by
\begin{align}
\label{eq:comp-log-lik}
  \cL_n^\text{comp}(\theta) 
  = - n^{-1} \sum_{i=1}^n \Big(& \log f_\theta(b_i) +  \sum_{k=1}^{K} \ind{\{g_i=k\}} \big( \log \P_\theta(g_i=k) + \log f_\theta(Y_i \, | \, b_i, g_i = k)\big) \nonumber \\
  &+ \log f_\theta(T_i, \Delta_i \, | \, Y_i, g_i = k) \Big),
\end{align}
 where $f_\theta(b_i)$ is the density of a multivariate gaussian $\cN(0, D)$ distribution and $f_\theta(Y_i \, | \, b_i, g_i = k)$ is typically the density of a $\cN\big(M_{ik},\Sigma_i\big)$ distribution.

\subsection{Penalized objective}
\label{sec:penalized-obj}
To avoid overfitting and provide interpretation on which longitudinal markers are relevant for predicting time-to-event, we propose to minimize the penalized negative log-likelihood
\begin{equation}
  \label{eq:pen-log-lik}
 \cL_n^\text{pen}(\theta) = \cL_n(\theta) + \Omega(\theta) =  \cL_n(\theta) + \sum_{k=1}^{K} \zeta_{1,k} \Omega_1(\xi_k) + \sum_{k=1}^{K} \zeta_{2,k} \Omega_2(\gamma_k),
\end{equation}
where $\Omega_1$ is an elastic net regularization~\citep{zou2005regularization}, 
$\Omega_2$ is a sparse group lasso regularization~\citep{simon2013sparse},
and $(\zeta_{1,k}, \zeta_{2,k})^\top \in (\R^+)^2$  regularization hyperparameters that need to be tuned. 
An advantage of this regularisation strategy is its ability to perform feature selection and to identify the most important features (longitudinal markers and time-independent) relative to the prediction objective. On the one hand, the support of $\xi_k$, controlled by the $\ell_1$ term in $\Omega_1$, provides information about the time-independent features involved in the $k$-th latent class membership while the $\ell_2$ regularization allows to handle correlations between time-independent features.
On the other hand, for the sparse group lasso penalty, a group $\ell$ corresponds to a trajectory, i.e. a longitudinal marker. Thus, if $\gamma_k^\ell$ is completely zero (thanks to the group lasso part), it means that the $\ell$-th longitudinal process is discarded by the model in terms of risk effect for the $k$-th latent class.
Then, the sparse part of the penalty allows a selection of association features for each trajectory: for $\gamma_k^\ell$ that are not completely zeroed, their support informs about the association features involved in the risk of the $k$-th latent class event for the $\ell$-th longitudinal marker.
\subsection{Optimization}
\label{sec:ext-EM}

Given our regularization strategy, we employ an extended version of the EM algorithm \citep{mclachlan2007algorithm} which we now briefly outline. Extensive details on the algorithm are given in Section~\ref{sec:ext-EM_sup} of the Supplementary Materials.  

Our final optimization problem writes
\begin{equation}
  \label{eq:optim-pb}
   \hat \theta \in \underset{\theta \in \R^P}{\argmin}  \, \cL^\text{pen}_n(\theta).
 \end{equation}
Assume that we are at step $w + 1$ of the EM algorithm, with current iterate denoted by $\theta^{(w)}$, then the algorithm consists in the following two steps:
\begin{itemize}
    \item E-step: compute the expected negative complete log-likelihood conditional on the current estimate of the parameters $\theta^{(w)}$, that is, $\cQ_n(\theta, \theta^{(w)}) = \E_{\theta^{(w)}}[\cL_n^\text{comp}(\theta) \,| \, \cD_n]$.
\item M-step: find $\theta^{(w+1)} \in \underset{\theta \in \R^P}{\argmin} \, \cQ^\text{pen}_n(\theta, \theta^{(w)})$,
where $\cQ^\text{pen}_n(\theta, \theta^{(w)}) = \cQ_n(\theta, \theta^{(w)}) + \Omega(\theta)$ and $\Omega(\theta)$ is the penalization defined in \eqref{eq:pen-log-lik}.
\end{itemize}

Under our assumptions, we can show that computing $\cQ_n(\theta, \theta^{(w)})$ reduces to computing the expectations $\E_{\theta^{(w)}}[ b_i | T_i, \Delta_i, Y_i]$ and $\E_{\theta^{(w)}}[ b_i b_i^\top | T_i, \Delta_i, Y_i]$, and the probabilities $\P_{\theta^{(w)}}[g_i = k | T_i, \Delta_i, Y_i]$, $k \in \{1, \dots, K\}$, see Section~\ref{sec:E-step} of the Supplementary Materials for their exact expressions.

Concerning the M-step, we divide the problem into several updates for which we minimize $\cQ^\text{pen}_n(\theta, \theta^{(w)})$ with respect to blocks of coordinates of $\theta$ separately.
The order of these updates matters. The updates for $D, (\beta_k)_{k \in \{1, \dots, K\}}, \lambda_0$, and $\phi$ are easily obtained in closed-form.  The update for $\xi_k^{(w)}$ reduces to the non-smooth convex minimization problem
\begin{equation}
  \label{eq:minimization-xi_k}
  \xi_k^{(w+1)} \in \underset{\xi \in \R^p}{\argmin} \, \mathcal{F}_{1,k}(\xi) + \zeta_{1,k} \Omega_1(\xi),
\end{equation}
where $\mathcal{F}_{1,k}$ is a convex function with respect to $\xi$. Problem~\eqref{eq:minimization-xi_k} is then solved using a quasi-Newton method, the L-BFGS-B algorithm \citep{zhu1997algorithm}. The update for $\gamma_k^{(w)}$ has a similar expression and is solved using proximal gradient descent~\citep{boyd2004convex}. We refer to Section~\ref{sec:ext-EM_sup} of the Supplementary Materials for all details and proofs. The final algorithm is also given with a discussion of its convergence properties.


\section{Evaluation methodology}
\label{sec:evaluation_methodology}

In this section, we present our evaluation strategy to assess real-time prediction performance of our model and briefly introduce the models used for comparison.

\subsection{Real-time prediction and evaluation strategy}
\label{sec:evaluation strategy}


Developments in joint models have focused primarily on modeling and estimation, and most studies do not consider goodness-of-fit or predictive performance of latent class membership or time-to-event~\citep{hickey2016joint}. 
However, for real-time or daily predictions, practitioners need predictive prognostic tools to evaluate and compare survival models. Therefore, we place ourselves in a so-called ``real-time'' prediction setting. Once the learning phase for the model has been completed on a training set, so that one obtains $\hat\theta$ from~\eqref{eq:optim-pb} using the approach described in Section~\ref{sec:ext-EM}, we want to make real-time predictions. More precisely, for each subject $i$, we seek to provide a predictive marker, typically the probability of belonging to a latent class at any time $t$, using all the data available up to that time, but without using the supervision labels $(T_i, \Delta_i)$, which are a priori not available at any time $t$.

\subsubsection{Predictive marker} In our setting, since each latent class represents the different risk levels of a subject, we choose the probability of latent class membership as the predictive marker. This is similar to what is classically done in JLCMs, where $\widetilde{\pi}_{ik}^{\hat\theta}= \P_{\hat\theta}[g_i = k | T_i, \Delta_i, Y_i]$ is typically used as the predictive rule \citep[see, e.g.,][]{proust2014joint}. However, this requires knowledge of the survival labels $(T_i, \Delta_i)$, which does not fit in our real-time prediction goal. Therefore, we define a new predictive marker as follows. 

For any subject $i$ and any time $s_i$ elapsed since entry into the study, given longitudinal markers $\cY_i(s_i^{-})$ observed up to $s_i$, for any $k \in \{1, \dots, K \}$, we let 
\[ \widehat \cR_{ik}(s_i) = \P_{\hat \theta}\big[g_i=k \, | \, T^\star_i > s_i, \cY_i(s_i^{-}) \big].\]
Indeed, for any subject $i$ who is event-free when $s_i$ has elapsed, all we know about that subject is that its time to the event of interest $T_i^\star$ exceeds $s_i$. This is equivalent to considering this subject as a new subject for which $T_i = s_i$, $\Delta_i = 0$, and $Y_i = \cY_i(s_i^{-})$. The expression of $\widehat \cR_{ik}(s_i)$ can thus be rewritten as $\widehat \cR_{ik}(s_i) = \P_{\hat \theta}[g_i=k \, | \, T_i = s_i, \Delta_i = 0, Y_i = \cY_i(s_i^{-})],$ see Lemma \ref{lemma:E_mu_b_i_gi} of the Supplementary Materials for more details.


We illustrate this real-time prediction setting in Figure \ref{fig:real_time}, emphasizing that $s_i$ should be thought of as the duration between the enrollment of individual $i$ and the ``present'' time.

\begin{figure}
    \centering
    \includegraphics[scale=.4]{./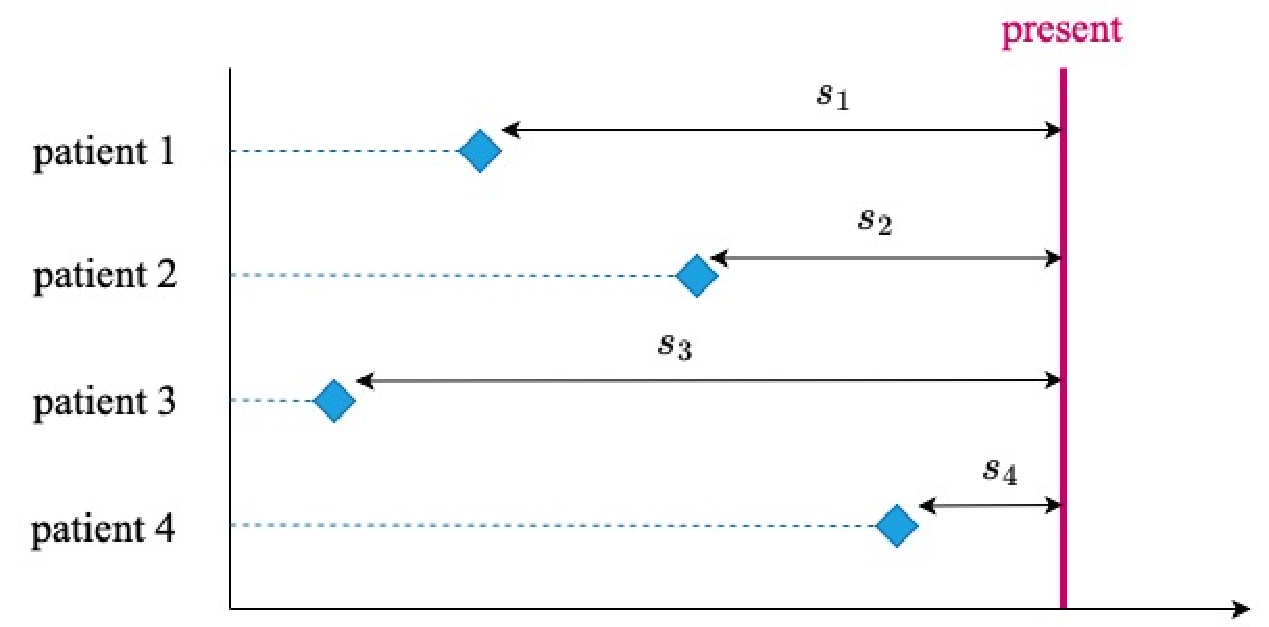}
    \caption{Real-time prediction setting. In a practical application, we want to be able to make predictions at any ``present'' time while subjects have entered the study at different times. Therefore, some of them have a lot of recorded information while the others have a few. }
    \label{fig:real_time}
\end{figure}

\subsubsection{Performance evaluation} 
We want to compare the quality of our predictions to the true labels $(T_i, \Delta_i)$, to which we have access in these comparison experiments. Given a test set in which each individual's trajectory is fully observed until the end of the study, we mimic the real-time prediction setting by randomly sampling the $s_i$.

We use the classical C-index~\citep{harrell1996tutorial} as our performance metric. More precisely, we assume that we are in the case $K=2$ and the class $g_i=2$ represents the high-risk group of subjects (the class $g_i=1$ then representing a low-risk group). We denote by $\widehat \cR_i = \widehat \cR_{i2}(s_i)$ the predictive marker that a subject $i$ belongs to class $g_i=2$ when $s_i$ has elapsed. Then, we let
$\cC =\P[\widehat \cR_i > \widehat \cR_j \, | \, T^\star_i < T^\star_j],
$ with $i \neq j$ two random independent subjects (note that $\cC$ does not depend on $i, j$ under the i.i.d.~sample hypothesis). 

In our case,  $T^\star$ is subject to right censoring, so one would typically consider the modified $\widetilde{\cC}$ defined by
$\widetilde{\cC} =\P[\widehat \cR_i > \widehat \cR_j \, | \, T_i < T_j , \, T_i < t^{\textnormal{max}}],$ where $t^{\textnormal{max}}$ corresponds to a fixed and predetermined follow-up period~\citep{heagerty2005survival}. It has been shown by \citet{uno2011c} that a Kaplan-Meier estimator for the censoring distribution leads to a nonparametric and consistent estimator of $\widetilde{\cC}$. 

In Section~\ref{sec:evaluation-procedure} of the Supplementary Materials, we give the complete procedure used to evaluate the performance of the models considered in our experiments.

\subsection{Competing models}
\label{sec:competing models}




We compare FLASH with the very classical and widely used 
\texttt{LCMM}~\citep{2017_lcmm} and \texttt{JMbayes}~\citet{2017_JMBayes}, which are extensions of JLCM and SREM  that allow for multivariate longitudinal markers. We present them, along with their respective predictive markers, in Section~\ref{sec:SREM_JLCM} of the Supplementary Materials. Note that not many joint models allow for multivariate longitudinal markers, which limits our choice of competing methods.

\section{Experimental results}
\label{sec:experiment}

To evaluate our method, we first perform in Subsection \ref{sec:simu-1} a simulation study that illustrates our estimation procedure. We then turn to a comparison study on both simulated and medical examples in Subsection \ref{sec:compare-data}, 
and finally show in Subsection \ref{sec:medical-data} that the biomarkers identified as significant by our model are consistent with current medical knowledge.

In all experiments, the features extracted by the \texttt{tsfresh} package~\citep{christ2018time} are used for association features $\Psi_m$ in FLASH. This package extracts dozens of features from a time series such as absolute energy, kurtosis, or autocorrelation. Before running our extended EM algorithm with the set of features extracted by the \texttt{tsfresh} package, we use a screening phase procedure where we select the top ten association features by fitting the extracted feature of each candidate and the survival labels in individual Cox models and comparing their C-index scores. In addition, a recent line of work is to use the signature transform~\citep{fermanian2021embedding, bleistein2024dynamical} to extract features from longitudinal markers. This transform encapsulates geometric information about multivariate time series. We provide additional results with the signature transform in Section~\ref{sec:sig_transform} of the Supplementary Materials, which show that our method is generic and performs well regardless of the feature extraction functions used.

We tune the regularization hyperparameters $(\zeta_{1,k}, \zeta_{2,k})_{k \in \{1, \dots, K\}}$ with a grid search and a 10-fold cross-validation with the C-index metric. Note that we keep $\zeta_{1,1} = \cdots = \zeta_{1,K}$ and $\zeta_{2,1} = \cdots = \zeta_{2,K}$. Extensive details on our experiments together with additional results on a high-dimensional dataset from NASA are given in Section~\ref{sec:experiment_detail} of the Supplementary Materials.

\subsection{Simulation study}\label{sec:simu-1}

\begin{figure}[]
    \centering
    \includegraphics[width=.7\textwidth]{./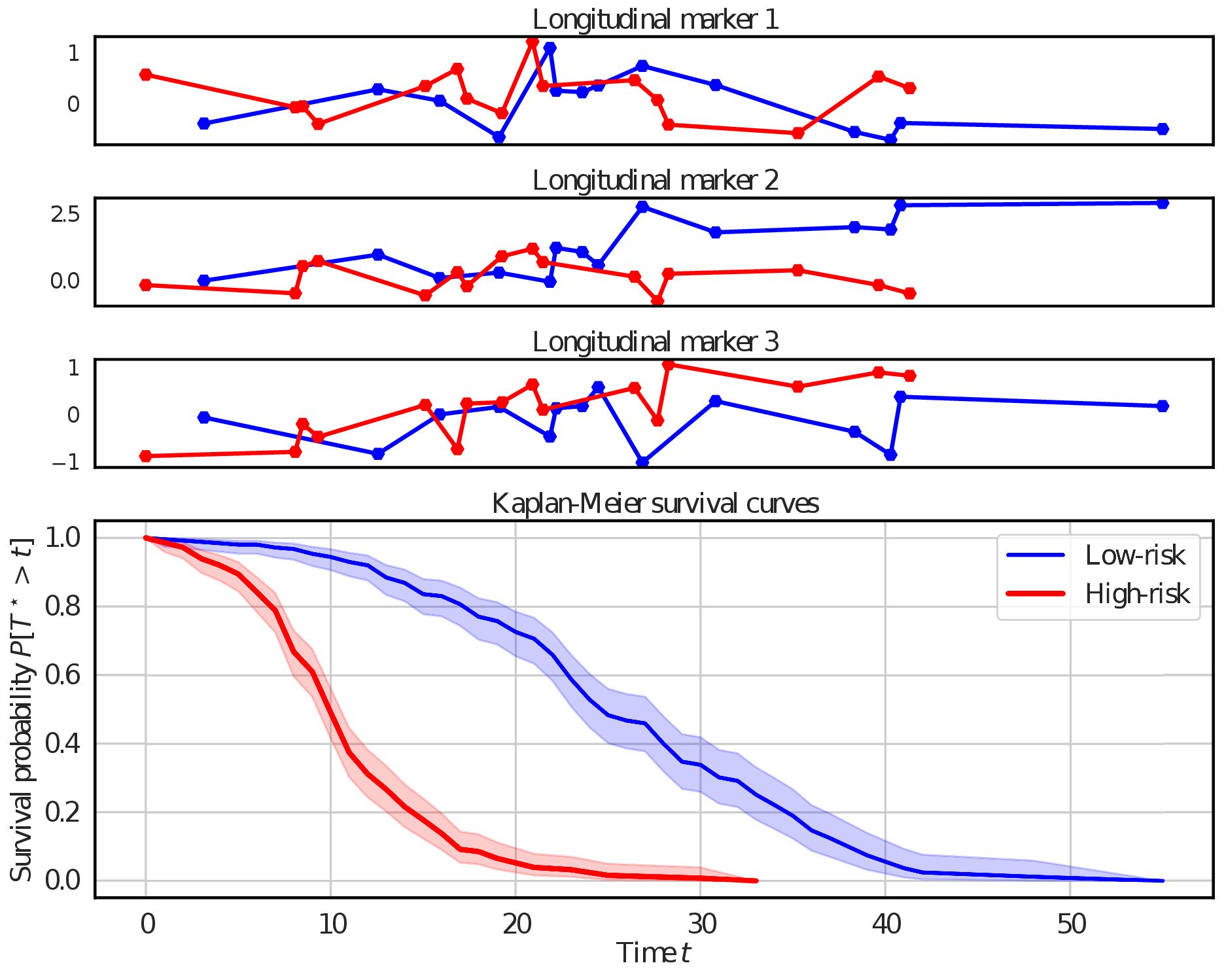}
    \caption{Simulated cohort of $n=500$ samples for $K=2$ groups (high-risk group in red curves and low-risk group in blue curves). Top figures:  trajectories of first three longitudinal markers of two individuals randomly selected in each group. Bottom figure: Kaplan-Meier survival curves for each group.}
    \label{fig:simu_data}
\end{figure}

To assess our estimation procedure, we simulate data as follows. First, we simulate the latent class membership indicator from the logistic regression model in~\eqref{eq:pi}. Based on this indicator, we divide the population into two groups: a high-risk group and a low-risk group. Within each group, we apply the classical survival simulation setting described by~\cite{bender2005generating}.  Next, we generate the longitudinal markers 
 $Y$ from the generalized linear mixed models in~\eqref{eq:Y_i_density}. Survival times are generated from their hazard functions given in~\eqref{eq:intensity-model}. The model coefficients $\xi_1$, $\xi_2$, and $\gamma$ of models~\eqref{eq:pi} and~\eqref{eq:intensity-model} are generated as sparse vectors, ensuring that only a subset of the corresponding features are active (i.e., the coefficients are non-zero). Extensive details of these simulations are provided in Section~\ref{sec:data-simu} of the Supplementary Materials. Figure \ref{fig:simu_data} shows some examples of simulated longitudinal markers and Kaplan-Meier survival curves.

\begin{figure}[]
    \centering
    \includegraphics[width=.7\textwidth]{./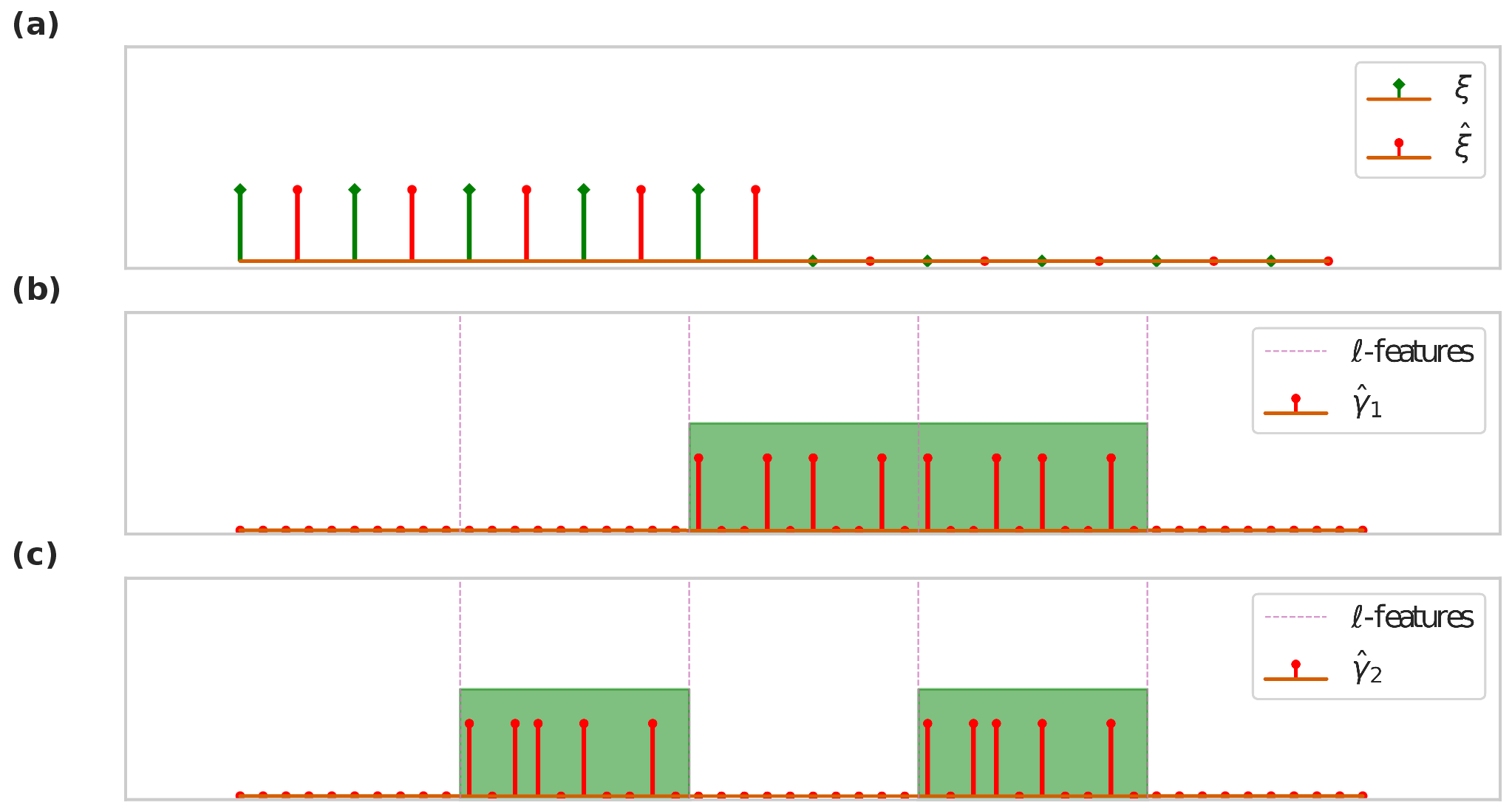}
    \caption{Simulations results. \textbf{(a)}: the support of both the true coefficient $\xi$ in green and its estimated version $\hat{\xi}$ in red. \textbf{(b)} and \textbf{(c)}: in red the support of the estimated coefficient $\hat{\gamma}_k$ for $k \in \{1, 2 \}$, the dashed pink lines separate the features corresponding to each longitudinal marker $\ell$, and active longitudinal markers are represented by a green area.}
    \label{fig:reg_eff}
\end{figure}

To illustrate our regularization strategy, we show in Figure \ref{fig:reg_eff} the time-independent parameter $\xi$ and the joint association parameters $(\gamma_k)_{k \in \{1, 2 \}}$ and their estimation obtained after running our learning procedure. We see in the sub-figure \textbf{(a)} that the support of $\xi$ is fully recovered thanks to the elastic-net penalty.
Additionally, sub-figures \textbf{(b)} and \textbf{(c)} demonstrate the effect of the sparse group lasso, showing that only the coefficients corresponding to active longitudinal features (represented by the green area) are non-zero, while all coefficients for inactive longitudinal features are zero. 

\subsection{Comparison study}
\label{sec:compare-data}

We compare FLASH with \texttt{JMbayes} and \texttt{LCMM} on two simulated and two real-world datasets and use the C-index metric presented in Section \ref{sec:evaluation_methodology}. The first simulated dataset is the one from the previous subsection, and the second one is from the R package \citep{hickey2018joinerml} \texttt{joineRML}. We give detailed description of this second simulated dataset and the two real-world datasets (\textit{PBCseq} and \textit{Aids}) in Section~\ref{sec:compared_data} of the Supplementary Materials. A summary of the datasets is given in Table \ref{tab:summary_dataset}.

\begin{table}[h]

\caption{Datasets characteristics: the number of samples $n$, the number of longitudinal features $L$, number of time-independent features $p$, and the overall number of parameters in FLASH model $P$. The names FLASH\_simu and joineRML\_simu correspond to the datasets simulated from the simulation study in Section \ref{sec:simu-1} and the \texttt{joineRML} package respectively.}
\label{tab:summary_dataset}
\begin{center}
\begin{tabular}{ccccc} 
 \toprule
 \textbf{Dataset} & $n$ & $L$ & $p$& $P$ \\ 
 \midrule
 FLASH\_simu & 500 & 5 & 10 & 224\\
 joineRML\_simu & 250 & 2 & 2 & 204\\ 
 PBCseq & 304 & 7 & 3 & 251 \\
 Aids & 467 & 1 & 4 & 147 \\ 
 Sepsis & 654 & 4 & 21 & 255\\
 \bottomrule
\end{tabular}
\end{center}
\end{table}

\begin{figure}[!htb]
    \centering
    \includegraphics[width=.7\textwidth]{./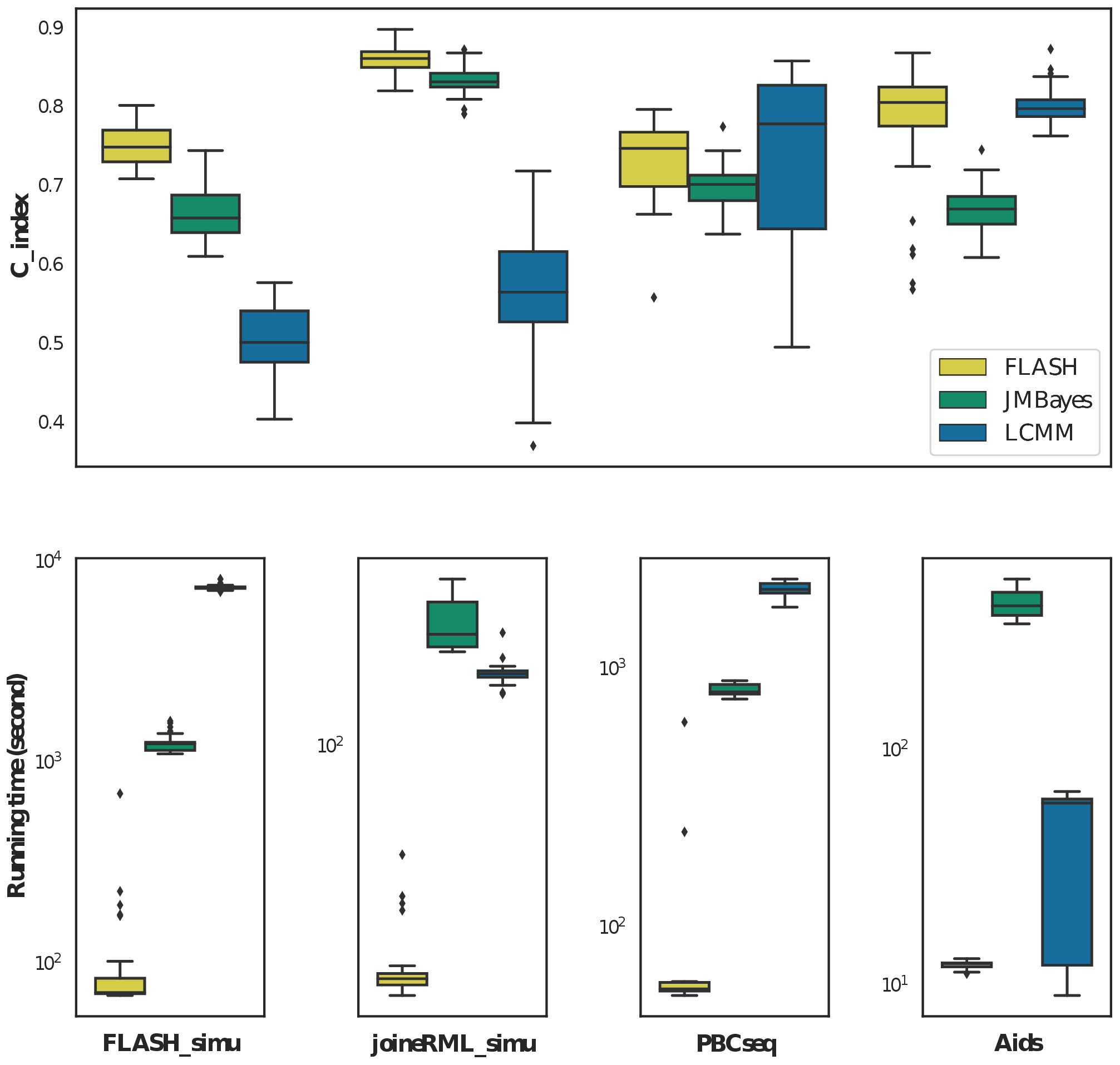}
    \caption{C-index (\textbf{top} figure) and runtime (\textbf{bottom} figures) comparison on the four datasets considered. The box plots of C-index and runtime are obtained with 50 independent experiments.}
    \label{fig:real_data_perf}
\end{figure}

We can see in Figure~\ref{fig:real_data_perf} that FLASH outperforms its competitors in terms of both C-index and running times on all datasets. The good performance of FLASH in terms of running times can be explained by the fact that it does not need to perform computationally intensive Monte Carlo techniques like \texttt{JMbayes}, while it is easier to satisfy the convergence criterion of our EM algorithm than that of \texttt{LCMM}.

\subsection{Biological interpretation of FLASH results} \label{sec:medical-data}
{Considering the growing emphasis on model interpretability and the fact that the new regulations in the European Union and the United States now require that a model be interpretable and understandable to be certified as a medical device \citep{geller2023food,panigutti2023role}, we conclude this section with interpretations of the results of FLASH on the PBCseq and Sepsis datasets.} The Sepsis dataset 
describes the sepsis diagnosis of patients, where, after a pre-processing step, 4 multivariate longitudinal features and 21 time-independent features are available for each patient. The coefficients estimated by FLASH, and in particular their sparsity, provide us information on which marker is involved in the diagnosis, see Section~\ref{sec:medical} of the Supplementary Materials for all numerical values. Note that we did not include the Sepsis dataset in Section~\ref{sec:compare-data} because both \texttt{LCMM} and \texttt{JMBayes} take a very long time to converge, so that we could not do the Monte Carlo comparison. This highlights the fact that they do not scale to high-dimensional settings. However, on one experimental trial, \texttt{LCMM} and \texttt{JMBayes} only give C-index scores of 0.51 and 0.56 while FLASH has a score of 0.74.


For the PBCseq dataset, alkaline phosphatase appears to be the most important variable, followed by prothrombin and albumin. Alkanine phosphatase is already recognized to be an important variable to monitor: it is already known that phosphatase alkaline at 6-month predicts non-responders and survival \citep{perez2023optimizing}, it is recognized that treatment target should be normalization of alkaline phosphatase \citep{perez2020goals}, and prognosis is improved for patients taking drugs lowering alkaline phosphatase as ursodeoxycholic acid \citep{kuiper2009improved}. Concerning the Sepsis dataset, respiratory rate is the most important variable. Systolic blood pressure appears to be the most important longitudinal feature, followed by oxygen saturation. Of note, the two most widely used prognostic criteria in sepsis, i.e. the qSOFA and the SIRS criteria, which contain respectively 3 and 4 variables, both include among these variables respiratory rate \citep{raith2017prognostic}.

\section{Discussion}
\label{sec:conclusion}

In this paper, a generalized joint model for high-dimensional multivariate longitudinal data and censored durations (FLASH) has been introduced, with an efficient estimation methodology based on an extension of the EM algorithm. This algorithm allows the use of regularization strategies in order to perform feature selection and results in an interpretable model scalable to high-dimensional longitudinal markers. We evaluated the performance of the estimation procedure on an extensive Monte Carlo simulation study. It showed that our method successfully recovered the most significant features. The proposed methodology has then been applied on four different datasets. On these datasets, FLASH outperforms competing methods, both in terms of C-index and runtimes, in a so-called ``real-time'' prediction setting. In addition, we show on experiments of medical datasets that our model automatically identifies the most important longitudinal markers and time-independent features, allowing important interpretations on the application at hand. Potential future work consists of extending the implementation to generalize our EM algorithm to support count or binary longitudinal features, and to relax the assumptions on latent class membership to allow classes to change with time, using for example a Markov structure.

\backmatter


\section*{Acknowledgements}
\textit{Conflict of Interest}: None declared. For this work, Agathe Guilloux and Van Tuan Nguyen benefited from the support of the Agence Nationale pour la Recherche under the France 2030 program with the reference ANR-22-PESN-0016. This work was also supported by the French National Cancer Institut (INCa) [grant number 2016-1-PL SHS-03-1]. The authors thank Linus Bleistein and Massil Hihat for fruitful discussions. 


\section*{Supplementary Materials}

Supplementary Materials, which includes the sections referenced in the main paper, is available with this paper at the Biometrics website on Wiley Online Library.\vspace*{-8pt}


\bibliographystyle{biom} \bibliography{refs}

\clearpage
\begin{appendices}
\setcounter{equation}{0}
\begin{center}
    \Large \textbf{Supplementary Material} 
\end{center}
\bigskip



\label{sec:sum_notation}
To help the reader, Table \ref{tab:notation} provides a list of notations used in the the paper.
\begin{table}[h]
\begin{center}
\caption{Summary of notation}
\label{tab:notation}
\begin{tabular}{cl } 
 \toprule
 \textbf{Notation} & \textbf{Definition} \\ 
 \midrule
 $X_i$ & Time-independent feature \\
 $Y_i$ & Longitudinal markers \\ 
 $T_i$ & Survival time \\ 
 $\Delta_i$ & Censoring indication \\ 
 $g_i$ & Latent class membership variable\\ 
 $b_i$ & Random effects of the longitudinal markers \\
 $n$ & Number of training samples \\
 $p$ & Number of time-independent features \\
 $L$ & Number of longitudinal markers \\
 $K$ & Number of latent classes\\
 $D$ & Variance-covariance matrix of the $b_i$\\
 $U_i$ & Fixed-effect design matrix \\ 
 $V_i$ & Random-effect design matrix \\ 
 $\psi_i$ & Association features \\ 
 $\xi$ & Time-independent parameters \\
 $\beta$ & Fixed-effect parameter\\
 $\gamma$ & Joint association parameter \\ 
 $\lambda_0$ & Baseline hazard function  \\ 
 $I_m$ & Identity matrix of $\R^{m \times m}$ \\
 $\mathbf{1}_m$ & Vector of $\R^m$ having all coordinates equal to one \\
 $\mathbf{0}_m$ & Vector of $\R^m$ having all coordinates equal to zero \\
 $\norm{\cdot}_q$ & The usual $\ell_q$-quasi norm, $q > 0$\\
 \bottomrule
\end{tabular}
\end{center}
\end{table}

\section{Details on the extended EM Algorithm}
\label{sec:ext-EM_sup}
We detail in this section our learning methodology. First, recall that the penalized negative log-likelihood is defined by
\begin{equation}
  \label{eq:pen-log-lik_ext}
 \cL_n^\text{pen}(\theta) =  \cL_n(\theta) + \sum_{k=1}^{K} \zeta_{1,k} \Omega_1(\xi_k) + \sum_{k=1}^{K} \zeta_{2,k} \Omega_2(\gamma_k),
\end{equation}
where
\[\Omega_1(\xi_k) =  (1-\eta)\norm{\xi_k}_1 + \dfrac\eta2 \norm{\xi_k}_2^2 \quad \text{and} \quad \Omega_2(\gamma_k) = (1-\tilde{\eta})\norm{\gamma_k}_1 + \tilde{\eta} \sum_{\ell=1}^L \norm{\gamma_k^\ell}_2, \]
where $(\eta, \tilde{\eta}) \in [0, 1]^2$ are fixed (depending on the level of sparsity expected), $\gamma_k^\ell = (\gamma_{k,1}^\ell, \ldots, \gamma_{k,M}^\ell )^\top \in \R^{M}$ is the subset of $\gamma_k$ corresponding to the longitudinal marker $\ell$, $\norm{\cdot}_{1}$ (resp. $\norm{\cdot}_{2}$) denotes the usual $\ell_{1}$ (resp. $\ell_{2}$) norm. In all our experiments, we take $\eta=0.1$ and $\tilde{\eta}=0.9$.

The goal is to minimize this objective function by the EM algorithm.
This is done in two steps: compute the expectation of the negative complete log-likelihood with respect to the unobserved quantities (the random effects $b_i$ and the latent classes $g_i$), then minimize the obtained quantity with respect to all parameters of the model in $\theta$. For simplicity, we won't compute all terms of the expectation in the E-step but only the quantities used in the M-step. Moreover, we perform minimization with respect to $\theta$ in several steps, minimizing with respect to block of parameters separately to obtain tractable updates. 


\subsection{E-step}
\label{sec:E-step}

Recall that, under our assumptions, the negative complete log-likelihood writes
\begin{align}
\label{eq:comp-log-lik_ext}
\cL_n^\text{comp}(\theta) = - n^{-1} \sum_{i=1}^n \Big( \log f_\theta(b_i) +  \sum_{k=1}^{K} \ind{\{g_i=k\}} \big( & \log \P_\theta(g_i=k) + \log f_\theta(Y_i \, | \, b_i, g_i = k)\big) \\
  & + \log f_\theta(T_i, \Delta_i \, | \, Y_i, g_i = k) \Big).
\end{align}
Let us introduce a few matrix notations. Concatenating all longitudinal markers and all observation times, the mean of the vector $Y_i \, | \, b_i, g_i = k$ (defined in \eqref{eq:Y_i_density} in the main paper) can be rewritten ${M_{ik} = U_i\beta_k + V_ib_i}$, where we introduce the design matrices
\[ U_i = 
\begin{bmatrix}
  U_i^1 & \cdots & 0\\
  \vdots &  \ddots & \vdots \\
  0 & \cdots & U_i^L
\end{bmatrix} 
\in \R^{n_i \times q}
\qquad \text{and} \qquad
V_i = 
\begin{bmatrix}
  V_i^1 & \cdots & 0\\
  \vdots &  \ddots & \vdots \\
  0 & \cdots & V_i^L
\end{bmatrix}
\in \R^{n_i \times r}
\]
and for all $\ell \in \{1, \ldots, L\}$, one writes
\[ U_i^\ell = 
\begin{bmatrix}
  {u^\ell(t_{i1}^\ell)}^\top\\ \vdots \\ {u^\ell(t_{in_i^\ell}^\ell)^\top }
\end{bmatrix}
\in \R^{n_i^\ell \times q_\ell}
\qquad \text{and} \qquad
V_i^\ell = 
\begin{bmatrix}
  {v^\ell(t_{i1}^\ell)^\top}\\ \vdots \\ {v^\ell(t_{in_i^\ell}^\ell)^\top}
\end{bmatrix}
\in \R^{n_i^\ell \times r_\ell}.
\]
Under all assumptions of Section \ref{sec:method} in the main paper, we can then write explicitly the different terms. The random effects simply follow a Gaussian distribution, which yields
\begin{align*}
    \log f_\theta(b_i) &= -\dfrac12 (r \log 2\pi + \log \det(D) + b_i^\top D^{-1}b_i).
\end{align*}
Then, the conditional density of the longitudinal features (in the Gaussian case) writes
\begin{equation*}
    \log f_\theta(Y_i \, | \, b_i, g_i = k) 
     =  -\dfrac12 \big(n_i \log 2\pi + \log \det(\Sigma_i) + (Y_i - U_i\beta_k - V_ib_i)^\top \Sigma_i^{-1}(Y_i - U_i\beta_k - V_ib_i)\big).
\end{equation*}
Finally, the survival terms in the complete likelihood write
\begin{align}
\label{eq:density_T_i_Delta_i}
     \log f_\theta(T_i, \Delta_i \, | \, Y_i, g_i = k) = \Delta_i \big(\log \lambda_0(T_i) + \psi_i(T_i)^\top \gamma_k \big) - \sum_{j=1}^{J} \lambda_0(\tau_j) \exp \big( \psi_i(\tau_j)^\top \gamma_k \big) \ind{\{\tau_j \leq T_i\}}.
\end{align}
We can therefore decompose the expected negative complete log-likelihood as
\begin{align*}
\cQ_n(\theta, \theta^{(w)}) &= \E_{\theta^{(w)}}[\cL_n^\text{comp}(\theta) \,| \, \cD_n] \\
 &= - n^{-1}\sum_{i=1}^{n}\Big( A^{1}_i(D) + \sum_{k=1}^K \P_{\theta^{(w)}}(g_i=k | \mathcal{D}_n) \big(A^{2}_{i,k}(\xi) + A^{3}_i(\beta_k, \Sigma_i) + A^{4}_i(\gamma_k, \lambda_0) \big) \Big) \\
 & \quad + \text{constants},
\end{align*}
where, for any $b \in \R^{r}$, $D \in \R^{r \times r}$, $\beta \in \R^{q}$, $\Sigma \in \R^{n_i}$, we define
\begin{align*}
    A^1_i(D) &= \frac{1}{2}\E_{\theta^{(w)}} \big[\log \det(D) + b_i^\top D^{-1}b_i \, | \, \mathcal{D}_n \big], \\
    A^{2}_{i,k}(\xi) &= \log \Big(\dfrac{e^{X_i^\top\xi_k}}{\sum_{j=1}^{K} e^{X_i^\top\xi_j}} \Big),\\
    A^{3}_i(\beta, \Sigma) &= \frac{1}{2} \E_{\theta^{(w)}} \big[\log \det(\Sigma) + (Y_i - U_i\beta - V_ib_i)^\top \Sigma^{-1}(Y_i - U_i\beta - V_ib_i) \, | \, \mathcal{D}_n\big],\\
    A^{4}_i(\gamma, \lambda_0) &= \Delta_i \big(\log \lambda_0(T_i) + \psi_i(T_i)^\top \gamma \big) - \sum_{j=1}^{J} \lambda_0(\tau_j) \exp \big(  \psi_i(\tau_j)^\top \gamma \big) \ind{\{\tau_j \leq T_i\}}.
\end{align*}
As explained before, we do not compute the expectation of the negative complete likelihood but only of the quantities needed in the M-step. First, the following technical lemmas will prove useful in the E-step and makes use of conjugate properties of Gaussian distributions.

\begin{lemma}\label{lemma:dist_Y_b}
For any $i \in \{1, \dots, n \}$, $k \in \{1, \dots K \}$,
\begin{equation*}
    Y_i \, | \, g_i = k \sim \cN\big(U_i\beta_k, V_iDV_i^\top + \Sigma_i\big)  \quad \text{and} \quad  b_i \, | \,Y_i, g_i=k \sim \cN \big(O_{i,k}, W_i\big),
\end{equation*}
where
\begin{equation}
    \label{eq:def_O_W}
    O_{i,k}=W_iV_i^T{\Sigma_i}^{-1}(Y_i - U_i\beta_k) \quad \text{and} \quad W_i = \big(V_i^\top {\Sigma_i}^{-1}V_i + {D}^{-1} \big)^{-1}.
\end{equation}
\end{lemma}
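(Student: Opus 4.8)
The plan is to prove Lemma~\ref{lemma:dist_Y_b} by exploiting the conjugacy of Gaussian distributions, starting from the two facts already established in the main text: namely $b_i \sim \cN(0, D)$ and, by Equation~\eqref{eq:Y_i_density} together with the matrix rewriting $M_{ik} = U_i\beta_k + V_i b_i$, that $Y_i \, | \, b_i, g_i = k \sim \cN(U_i\beta_k + V_i b_i, \Sigma_i)$. The first assertion, the marginal law of $Y_i$ given $g_i = k$, follows immediately: $Y_i$ is an affine function of the Gaussian vector $b_i$ plus an independent Gaussian noise, so it is Gaussian with mean $\E[U_i\beta_k + V_i b_i] = U_i\beta_k$ and covariance $V_i \operatorname{Cov}(b_i) V_i^\top + \Sigma_i = V_i D V_i^\top + \Sigma_i$.

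For the conditional law $b_i \, | \, Y_i, g_i = k$, I would proceed in one of two equivalent ways. The cleanest is to write down the joint density of $(b_i, Y_i)$ conditionally on $g_i = k$, which is proportional to
\[
\exp\Big(-\tfrac12 b_i^\top D^{-1} b_i - \tfrac12 (Y_i - U_i\beta_k - V_i b_i)^\top \Sigma_i^{-1} (Y_i - U_i\beta_k - V_i b_i)\Big),
\]
then collect the terms that are quadratic and linear in $b_i$. The quadratic form in $b_i$ is $b_i^\top (D^{-1} + V_i^\top \Sigma_i^{-1} V_i) b_i$, which identifies the posterior precision as $W_i^{-1} = D^{-1} + V_i^\top \Sigma_i^{-1} V_i$, matching~\eqref{eq:def_O_W}. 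The linear term is $b_i^\top V_i^\top \Sigma_i^{-1}(Y_i - U_i\beta_k)$, so completing the square gives posterior mean $W_i V_i^\top \Sigma_i^{-1}(Y_i - U_i\beta_k) = O_{i,k}$, again as claimed. Since the exponent is a (negative) quadratic in $b_i$, the conditional law is Gaussian, and we are done. Alternatively, one can invoke the standard formula for conditioning in a jointly Gaussian vector $(b_i^\top, Y_i^\top)^\top$ with cross-covariance $\operatorname{Cov}(b_i, Y_i) = D V_i^\top$, and then use the Woodbury / Sherman–Morrison matrix identity to check that $D V_i^\top (V_i D V_i^\top + \Sigma_i)^{-1} = W_i V_i^\top \Sigma_i^{-1}$ and $D - D V_i^\top (V_i D V_i^\top + \Sigma_i)^{-1} V_i D = W_i$; this shows the two parametrizations coincide.

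There is no serious obstacle here — the result is a textbook Gaussian conjugacy computation — so the only thing to be careful about is bookkeeping: ensuring $\Sigma_i$ is invertible (it is diagonal with strictly positive entries $\phi_\ell > 0$), ensuring $D$ is invertible (assumed, as a genuine variance–covariance matrix), and verifying the Woodbury identity is applied in the right direction if one takes the second route. I would present the density-based argument as the main proof since it avoids the matrix-inversion lemma entirely, and relegate the equivalence of the two forms of $O_{i,k}$ and $W_i$ to a one-line remark. The affine-plus-noise argument for the marginal $Y_i \, | \, g_i = k$ should be stated first and kept to a single sentence.
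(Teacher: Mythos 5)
Your proof is correct and follows essentially the same route as the paper: the marginal of $Y_i \mid g_i = k$ via the affine-Gaussian-plus-independent-noise argument, and the conditional of $b_i \mid Y_i, g_i = k$ via Bayes's rule and completing the square in $b_i$, yielding exactly the precision $W_i^{-1} = D^{-1} + V_i^\top \Sigma_i^{-1} V_i$ and mean $O_{i,k}$. The Woodbury-based alternative you mention is a fine remark but is not needed, and the paper does not use it.
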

\begin{proof}
From \eqref{eq:Y_i_density} in the main paper we know that $Y_i \, | \, b_i, g_i = k \sim \cN\big(M_{ik},\Sigma_i\big)$ and $b_i \sim \cN\big(0, D) $, which gives
\begin{equation*}
    Y_i | g_i = k \sim \cN\big(U_i\beta_k, V_iDV_i^\top + \Sigma_i\big).
\end{equation*}
Moreover, by Bayes's rule, the distribution of $b_i | Y_i, g_i=k$ can be written
\begin{align*}
f_\theta(b_i | Y_i, g_i=k) &\propto f_\theta(Y_i | b_i, g_i=k) f_\theta(b_i | g_i=k)\nonumber\\
&\propto \exp{\big((Y_i - U_i\beta_k - V_ib_i)^\top \Sigma_i^{-1}(Y_i - U_i\beta_k - V_ib_i) + b_i^\top D^{-1}b_i}\big)\nonumber\\
 &\propto \exp{\big((b_i - O_{i,k})^\top W_i^{-1}(b_i - O_{i,k})\big)},
\end{align*}
where $O_{i,k}=W_iV_i^T{\Sigma_i}^{-1}(Y_i - U_i\beta_k)$ and $W_i = \big(V_i^\top \Sigma_i^{-1}V_i + D^{-1} \big)^{-1}$. We then have
\begin{equation*}
  b_i | Y_i, g_i=k \sim \cN \big(O_{i,k}, W_i\big).
\end{equation*}
\end{proof}

The following lemma gives the three expectations that appear in this E-step.

\begin{lemma}
\label{lemma:E_mu_b_i_gi}
    For any $i \in \{1, \dots, n \}$, $k \in \{1, \dots, K \}$, $\theta \in \R^P$, the three following integrals are closed-form and write
    \begin{equation}
    \label{eq:E_b}
    \E_{\theta}[ b_i | \mathcal{D}_n] = \dfrac{ \sum_{j=1}^{K} \P_\theta(g_i=j) f_\theta(T_i, \Delta_i | Y_i, g_i=j) f_{\theta}(Y_i | g_i = j) O_{i,j}}{\sum_{j=1}^{K} \P_\theta(g_i=j) f_\theta(T_i, \Delta_i | Y_i, g_i=j) f_{\theta}(Y_i | g_i = j)},
    \end{equation}
    
    \begin{equation}
    \label{eq:E_bbT}
     \E_{\theta}[ b_ib_i^\top | \mathcal{D}_n] = \dfrac{ \sum_{j=1}^{K} \P_\theta(g_i=j) f_\theta(T_i, \Delta_i | Y_i, g_i=j) f_{\theta}(Y_i | g_i = j) (W_i + O_{i,j}{O_{i,j}}^\top)}{\sum_{j=1}^{K} \P_\theta(g_i=j) f_\theta(T_i, \Delta_i | Y_i, g_i=j) f_{\theta}(Y_i | g_i = j)},
     \end{equation}
     and
     \begin{equation}
     \label{eq:E_ind}
      \tilde{\pi}^{\theta}_{ik} = \P_\theta(g_i=k | \mathcal{D}_n) = \dfrac{\P_\theta(g_i=k) f_\theta(T_i, \Delta_i | Y_i, g_i=k) f_{\theta}(Y_i | g_i = k)}{\sum_{j=1}^{K} \P_\theta(g_i=j) f_\theta(T_i, \Delta_i | Y_i, g_i=j) f_{\theta}(Y_i | g_i = j)},
     \end{equation}
    where  
    $f_\theta(Y_i | g_i = j)$ is the density of the multivariate Gaussian distribution of Lemma \ref{lemma:dist_Y_b}.
\end{lemma}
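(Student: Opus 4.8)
The plan is to compute each of the three expectations by first conditioning on the latent class $g_i$ and then exploiting the conditional independence structure of the FLASH model. First I would note that, since the $n$ subjects are i.i.d.\ and $(b_i,g_i)$ depends only on subject $i$'s variables, conditioning on $\cD_n$ is the same as conditioning on $(X_i,Y_i,T_i,\Delta_i)$, the dependence on $X_i$ being already absorbed into the notation $\P_\theta(g_i=k)$ through the logistic model~\eqref{eq:pi}. Then, for any measurable function $h$ of $b_i$, the law of total expectation over the discrete variable $g_i$ gives
\[
\E_\theta[h(b_i)\mid\cD_n]=\sum_{k=1}^K \E_\theta[h(b_i)\mid Y_i,T_i,\Delta_i,g_i=k]\,\P_\theta(g_i=k\mid Y_i,T_i,\Delta_i).
\]

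Next I would obtain the posterior weights $\tilde\pi_{ik}^\theta=\P_\theta(g_i=k\mid Y_i,T_i,\Delta_i)$ by Bayes' formula, writing the joint conditional density as $f_\theta(T_i,\Delta_i,Y_i\mid g_i=k)=f_\theta(T_i,\Delta_i\mid Y_i,g_i=k)\,f_\theta(Y_i\mid g_i=k)$ and dividing by the marginal $\sum_{j}\P_\theta(g_i=j)f_\theta(T_i,\Delta_i\mid Y_i,g_i=j)f_\theta(Y_i\mid g_i=j)$; this is precisely~\eqref{eq:E_ind}. The crucial step is then to argue that the conditional law of $b_i$ given $(Y_i,T_i,\Delta_i,g_i=k)$ coincides with its law given only $(Y_i,g_i=k)$. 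This follows from Assumption~\ref{indep-hyp-2}: since the hazard, and hence $f^\star(T_i,\Delta_i\mid Y_i,b_i,g_i)$, does not depend on $b_i$, we have $(T_i,\Delta_i)\perp b_i\mid(Y_i,g_i)$, so that $f_\theta(b_i\mid Y_i,T_i,\Delta_i,g_i=k)=f_\theta(b_i\mid Y_i,g_i=k)$, which by Lemma~\ref{lemma:dist_Y_b} is the $\cN(O_{i,k},W_i)$ density.

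Finally I would substitute the first two moments of $\cN(O_{i,k},W_i)$, namely $\E_\theta[b_i\mid Y_i,g_i=k]=O_{i,k}$ and $\E_\theta[b_ib_i^\top\mid Y_i,g_i=k]=W_i+O_{i,k}O_{i,k}^\top$, into the tower-rule identity together with the expression for $\tilde\pi_{ik}^\theta$, which yields~\eqref{eq:E_b} and~\eqref{eq:E_bbT}. Closed-formness is then immediate: $f_\theta(Y_i\mid g_i=j)$ is the explicit multivariate Gaussian density of Lemma~\ref{lemma:dist_Y_b}, $f_\theta(T_i,\Delta_i\mid Y_i,g_i=j)$ is the closed-form product of the hazard factor and the finite-sum survival term given in~\eqref{eq:density_T_i_Delta_i}, and $O_{i,k},W_i$ are explicit from~\eqref{eq:def_O_W}. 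The only genuinely delicate point is the conditional-independence reduction of $b_i\mid(Y_i,T_i,\Delta_i,g_i)$ to $b_i\mid(Y_i,g_i)$ via Assumption~\ref{indep-hyp-2}; the remainder is bookkeeping with Bayes' rule and Gaussian moments.
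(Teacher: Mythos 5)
Your proof is correct and takes essentially the same route as the paper's: Bayes' rule for the class posterior, Assumption~\ref{indep-hyp-2} to reduce the conditional law of $b_i$ given $(Y_i,T_i,\Delta_i,g_i=k)$ to that given $(Y_i,g_i=k)$, and the Gaussian moments $O_{i,k}$ and $W_i+O_{i,k}O_{i,k}^\top$ from Lemma~\ref{lemma:dist_Y_b}. The only cosmetic difference is that you apply the tower rule over $g_i$ directly, whereas the paper writes the mixture density of $b_i$ given $\mathcal{D}_n$ explicitly and integrates $\mu(b_i)$ against it; the two computations are algebraically identical.
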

\begin{proof}
By Assumption~\ref{indep-hyp-2} in the main paper, the distribution of $b_i$ given the observed data $\mathcal{D}_n$, for any $\theta \in \R^P$, writes
\begin{align*}
    f_{\theta}(b_i | \mathcal{D}_n)
    &= \dfrac{f_{\theta}(b_i, T_i, \Delta_i, Y_i)}{f_{\theta}(T_i, \Delta_i, Y_i)} \\
    &= \frac{1}{f_{\theta}(T_i, \Delta_i, Y_i)} \sum_{j=1}^{K} \P_\theta(g_i=j)f_{\theta}(b_i, T_i, \Delta_i, Y_i | g_i=j)\\
    &= \frac{1}{f_{\theta}(T_i, \Delta_i, Y_i)} \sum_{j=1}^{K} \P_\theta(g_i=j) f_{\theta}(Y_i | g_i=j) f_{\theta}(T_i, \Delta_i| Y_i, g_i=j) f_{\theta}(b_i|Y_i, g_i=j).
\end{align*}
Similarly, we have
\begin{equation*}
    f_{\theta}(T_i, \Delta_i, Y_i) = \sum_{j=1}^{K}\P_\theta(g_i=j) f_{\theta}(Y_i | g_i=j) f_{\theta}(T_i, \Delta_i| Y_i, g_i=j).
\end{equation*}
This gives that, for any function $\mu$, 
\begin{align*}
  \E_{\theta}[ \mu(b_i) | \mathcal{D}_n] &= \int_{\R^r} \mu(b_i) f_{\theta}(b_i | \mathcal{D}_n) \dd b_i \\
& = \frac{1}{f_{\theta}(T_i, \Delta_i, Y_i)}\sum_{j=1}^{K} \P_\theta(g_i=j)f_{\theta}(Y_i | g_i=j) f_{\theta}(T_i, \Delta_i| Y_i, g_i=j) \int_{\R^r} \mu(b_i) f_{\theta}(b_i|Y_i, g_i=j) \dd b_i\\
   & = \frac{1}{f_{\theta}(T_i, \Delta_i, Y_i)}\sum_{j=1}^{K} \P_\theta(g_i=j)f_{\theta}(Y_i | g_i=j) f_{\theta}(T_i, \Delta_i| Y_i, g_i=j) \E_{\theta}[\mu(b_i) |Y_i, g_i=j]
\end{align*}
Hence, for the functions $\mu(b_i) = b_i$ and $\mu(b_i) = b_i b_i^\top$, we obtain the result by applying Lemma \ref{lemma:dist_Y_b}, which gives
\[\E_{\theta}[ b_i | Y_i, g_i = j] = O_{i,j}, \quad \text{and} \quad \E_{\theta}[ b_ib_i^\top | Y_i, g_i = j] = W_i + O_{i,j}{O_{i,j}}^\top.\]
In the same manner, we can see that for any $k \in \{1, \dots, K \}$,
\begin{align*}
   \E_{\theta}[\ind{\{g_i=k\}} | T_i, \Delta_i, Y_i] &= \P_{\theta}(g_i=k | T_i, \Delta_i, Y_i) \\
   &= \frac{1}{f_\theta(T_i, \Delta_i, Y_i)} \P_\theta(g_i=k) f_\theta(T_i, \Delta_i | Y_i, g_i=k) f_\theta(Y_i | g_i=k),
\end{align*}
which concludes the proof.
\end{proof}
\paragraph{Extensions of the longitudinal model} To support other types of longitudinal features (e.g., count, binary, etc) where the conditional longitudinal features $Y_i \, | \, b_i, g_i = k$ are not Gaussian, the simple solution is to choose distributions such that  the random effect $b_i$ and $Y_i \, | \, b_i, g_i = k$ have conjugate distributions. There are several options, for example using the beta-binomial distribution for binary features or negative-binomial for count features. In this case, the expectation $\E_{\theta}[\mu(b_i) |Y_i, g_i=j]$ and the density $f_{\theta}(b_i|Y_i, g_i=j)$ can be computed in closed forms~\citep{molenberghs2010family} and the extension is trivial. If the distributions are not conjugate, some numerical integration methods can be used to compute $\E_{\theta}[\mu(b_i) |Y_i, g_i=j]$ and $f_{\theta}(b_i|Y_i, g_i=j)$~\citep[see, e.g.,][]{fabio2012poisson}.

\subsection{M-step: closed-form updates}
\label{sec:step-closed-form}
Now, we assume that we are at step $w+1$ of the algorithm, meaning that we have a current value $\theta^{(w)}$ of the parameters and we update it to get the new parameters $\theta^{(w+1)}$ by solving
\begin{equation}
     \theta^{(w+1)} \in \underset{\theta \in \R^P}{\argmin} \, \cQ^\text{pen}_n(\theta, \theta^{(w)}).
\end{equation}
 Note that we update for the coordinates of $\theta^{(w)}$ in order, which is $(D^{(w+1)},
(\xi_k^{(w+1)})_{k \in \{1, \dots, K\}},
(\beta_k^{(w+1)})_{k \in \{1, \dots, K\}},\\
(\gamma_k^{(w+1)})_{k \in \{1, \dots, K\}},
\lambda_0^{(w+1)}$, $\phi^{(w+1)})$. Then, following this order, the update of later coordinates use the latest update of the previous ones. The update of several coordinates of $\theta^{(w+1)}$ can be obtained in closed-form. 

\begin{lemma}
At step $w + 1$ of the EM algorithm, the update of $D$ is
\begin{equation*}
  \label{eq:update_D}
  D^{(w+1)} = n^{-1} \sum_{i=1}^n \E_{\theta^{(w)}}[ b_i b_i^\top | T_i, \Delta_i, Y_i],
\end{equation*}
where $\E_{\theta^{(w)}}[ b_i b_i^\top | T_i, \Delta_i, Y_i]$ is given by \eqref{eq:E_bbT}.
\end{lemma}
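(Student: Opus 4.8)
The plan is to derive the update of $D$ by directly minimizing the $D$-dependent part of the expected negative complete log-likelihood $\cQ_n(\theta, \theta^{(w)})$. Since $D$ appears only in the term $A^1_i(D) = \tfrac12 \E_{\theta^{(w)}}\big[\log \det(D) + b_i^\top D^{-1} b_i \,\big|\, \cD_n\big]$ and not in the penalty $\Omega(\theta)$, minimizing $\cQ^\text{pen}_n$ with respect to $D$ is equivalent to minimizing $n^{-1}\sum_{i=1}^n A^1_i(D)$.

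First I would rewrite the relevant objective using the linearity of the trace and of the conditional expectation. Writing $b_i^\top D^{-1} b_i = \Tr(D^{-1} b_i b_i^\top)$ and pulling the expectation inside, the function to minimize becomes
\begin{equation*}
  g(D) = \tfrac12 \Big( \log \det(D) + \Tr\big(D^{-1} S^{(w)}\big) \Big), \qquad S^{(w)} = n^{-1}\sum_{i=1}^n \E_{\theta^{(w)}}[ b_i b_i^\top \,|\, T_i, \Delta_i, Y_i],
\end{equation*}
up to an additive constant in $D$. Here I use that $\log\det(D)$ is deterministic and can be taken out of the expectation. This is exactly the classical form of the Gaussian covariance MLE objective.

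Next I would compute the gradient with respect to $D$ (or equivalently with respect to the precision $D^{-1}$) using standard matrix calculus: $\nabla_D \log\det(D) = D^{-1}$ and $\nabla_D \Tr(D^{-1} S^{(w)}) = -D^{-1} S^{(w)} D^{-1}$. Setting the gradient to zero gives $D^{-1} = D^{-1} S^{(w)} D^{-1}$, i.e. $D = S^{(w)}$, which yields the announced formula $D^{(w+1)} = n^{-1}\sum_{i=1}^n \E_{\theta^{(w)}}[ b_i b_i^\top \,|\, T_i, \Delta_i, Y_i]$. To be rigorous I would note that $g$ is (up to sign conventions) a strictly convex function of the precision matrix $D^{-1}$ on the cone of positive definite matrices, so this stationary point is the unique global minimizer; the matrix $S^{(w)}$ is symmetric positive (semi-)definite by construction as an average of conditional second-moment matrices, hence a valid covariance. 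Finally, $\E_{\theta^{(w)}}[ b_i b_i^\top \,|\, T_i, \Delta_i, Y_i]$ is given in closed form by Equation~\eqref{eq:E_bbT} of Lemma~\ref{lemma:E_mu_b_i_gi}, completing the argument.

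The only mild subtlety—hardly an obstacle—is justifying the interchange of expectation and the trace/linear operations (immediate, since everything is finite-dimensional and the conditional moments exist) and confirming that the unconstrained stationary point automatically lands in the positive definite cone so that no constrained optimization is needed; both follow from the positive semidefiniteness of $S^{(w)}$ and the convexity of $g$ in $D^{-1}$.
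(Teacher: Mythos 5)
Your proposal is correct and follows essentially the same route as the paper: isolate the $D$-dependent term $n^{-1}\sum_i A^1_i(D)$, differentiate with respect to $D$ using $\nabla_D \log\det(D)=D^{-\top}$ and $\nabla_D\,b^\top D^{-1}b = -D^{-\top}bb^\top D^{-\top}$, and cancel the gradient to obtain $D^{(w+1)}=n^{-1}\sum_i\E_{\theta^{(w)}}[b_ib_i^\top\mid T_i,\Delta_i,Y_i]$. Your additional remarks (the trace reformulation, convexity in the precision matrix, and positive semi-definiteness of the stationary point) are a welcome tightening but do not change the argument.
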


\begin{proof}
    The update for $D^{(w)}$ requires to solve the following minimization problem
\begin{equation}
  \label{eq:minimization-beta_k}
  D^{(w+1)} \in  \underset{D \in \R^{q \times q}}{\argmin} \,  -n^{-1} \sum_{i=1}^{n} A^1_i(D),
\end{equation}
We have for any $i \in \{1, \dots, n \}$,
\begin{align*}
   A^1_i(D) &= \E_{\theta^{(w)}}\big[\log \det(D) + b_i^{\top} D^{-1} b_i \big]  = \log \det(D) + \int_{\R^r} b_i^\top D^{-1}b_i f_{\theta^{(w)}}(b_i | T_i, \Delta_i, Y_i) \dd b_i.
\end{align*}
The gradient of $A^1_i$ is here given by
\begin{align*}
  \frac{\partial A^1_i(D)}{\partial D} &= D^{-\top} - \int_{\R^r} D^{-\top} b_i b_i^\top D^{-\top} f_{\theta^{(w)}}(b_i | T_i, \Delta_i, Y_i) \dd b_i  \\
  &= D^{-\top} -  D^{-\top} \E_{\theta^{(w)}}[ b_i b_i^\top | T_i, \Delta_i, Y_i] D^{-\top},
\end{align*}
where $D^{-\top}$ is the transpose of matrix $D^{-1}$. 
The proof is completed by cancelling the gradient, that is 
\begin{align*}
    -n^{-1} \sum_{i=1}^{n} \frac{\partial A^1_i(D)}{\partial  D} = 0
 \quad \Leftrightarrow \quad - D^{-\top} + n^{-1} \sum_{i=1}^{n}D^{-\top} \E_{\theta^{(w)}}[ b_i b_i^\top | T_i, \Delta_i, Y_i] D^{-\top} = 0
\end{align*}
\end{proof}

\begin{lemma}
At step $w + 1$ of the EM algorithm, the update of $\beta_k$ is
\begin{equation}
  \label{eq:update_beta}
  {\beta_k}^{(w+1)}= \Big(\sum_{i=1}^n \tilde{\pi}_{ik}^{\theta^{(w)}} U_i^\top U_i\Big)^{-1}\Big(\sum_{i=1}^n \tilde{\pi}_{ik}^{\theta^{(w)}} U_i^\top (Y_i - V_i \E_{\theta^{(w)}}[ b_i | T_i, \Delta_i, Y_i])\Big),
\end{equation}
where $\E_{\theta^{(w)}}[ b_i | T_i, \Delta_i, Y_i]$ and $\tilde{\pi}_{ik}^{\theta^{(w)}}$ are given respectively by \eqref{eq:E_b} and \eqref{eq:E_ind} .
\end{lemma}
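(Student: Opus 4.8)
The plan is to derive the update for $\beta_k$ by isolating the only term of $\cQ_n^{\text{pen}}(\theta, \theta^{(w)})$ that depends on $\beta_k$ and minimizing it in closed form, exactly as was done for $D$. First I would note that $\beta_k$ appears neither in the penalization $\Omega(\theta)$ (only $\xi_k$ and $\gamma_k$ are penalized) nor in the terms $A_i^1, A_{i,k}^2, A_i^4$, so the minimization in $\beta_k$ reduces to
\begin{equation*}
  \beta_k^{(w+1)} \in \underset{\beta \in \R^q}{\argmin} \; -n^{-1} \sum_{i=1}^n \tilde\pi_{ik}^{\theta^{(w)}} A_i^3(\beta, \Sigma_i),
\end{equation*}
where $\tilde\pi_{ik}^{\theta^{(w)}} = \P_{\theta^{(w)}}(g_i = k \mid \cD_n)$ is given by \eqref{eq:E_ind} and treated as a constant in the M-step.

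Next I would expand $A_i^3(\beta, \Sigma_i) = \tfrac12 \E_{\theta^{(w)}}\big[\log\det(\Sigma_i) + (Y_i - U_i\beta - V_ib_i)^\top \Sigma_i^{-1}(Y_i - U_i\beta - V_ib_i) \,\big|\, \cD_n\big]$. The $\log\det(\Sigma_i)$ term is independent of $\beta$ and drops out. Expanding the quadratic form and using linearity of the conditional expectation, the $\beta$-dependent part is
\begin{equation*}
  \tfrac12\big(\beta^\top U_i^\top \Sigma_i^{-1} U_i \beta - 2\beta^\top U_i^\top \Sigma_i^{-1}(Y_i - V_i\,\E_{\theta^{(w)}}[b_i \mid \cD_n])\big),
\end{equation*}
since the term quadratic in $b_i$ does not involve $\beta$. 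Taking the gradient in $\beta$ of $-n^{-1}\sum_i \tilde\pi_{ik}^{\theta^{(w)}} A_i^3$, setting it to zero, and using that $\Sigma_i$ is a (block-scalar) diagonal matrix — and in fact, following the main text, one may absorb the common variance structure so that the normal equations reduce to $\sum_i \tilde\pi_{ik}^{\theta^{(w)}} U_i^\top U_i \beta = \sum_i \tilde\pi_{ik}^{\theta^{(w)}} U_i^\top(Y_i - V_i\,\E_{\theta^{(w)}}[b_i \mid \cD_n])$ — yields the stated formula after inverting the Gram matrix. Convexity in $\beta$ (the Hessian $n^{-1}\sum_i \tilde\pi_{ik}^{\theta^{(w)}} U_i^\top \Sigma_i^{-1} U_i$ is positive semidefinite, positive definite when $\sum_i \tilde\pi_{ik}^{\theta^{(w)}} U_i^\top U_i$ is invertible) guarantees this critical point is the global minimizer.

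The only mildly delicate point — and the one I would be careful about — is the handling of $\Sigma_i$ in the normal equations: the formula in \eqref{eq:update_beta} has $U_i^\top U_i$ rather than $U_i^\top \Sigma_i^{-1} U_i$, which is only valid because of the specific block structure of $\Sigma_i$ (each marker $\ell$ contributing a scalar $\phi_\ell$ times an identity block) combined with the block-diagonal structure of $U_i$; this lets the per-marker variance factors cancel marker-by-marker in the stationarity condition. I would state this reduction explicitly rather than gloss over it. Everything else is a routine weighted least-squares computation: the substitution $\E_{\theta^{(w)}}[b_i \mid \cD_n]$ for $b_i$ is exactly the E-step quantity \eqref{eq:E_b}, and no further integration is needed because $A_i^3$ is affine in $b_i$ after the quadratic-in-$b_i$ piece (which is $\beta$-free) is discarded.
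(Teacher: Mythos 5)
Your proposal is correct and follows essentially the same route as the paper: isolate the term $-n^{-1}\sum_i \tilde\pi_{ik}^{\theta^{(w)}} A_i^3(\beta,\Sigma_i^{(w)})$, expand the quadratic, replace $b_i$ by its conditional expectation, and cancel the gradient; the paper handles the disappearance of $\Sigma_i$ exactly as you describe, by using the block structure to write $\Sigma_i^{-1}U_i = U_i\tilde\Sigma^{-1}$ with $\tilde\Sigma$ independent of $i$, so the $\phi_\ell$ factors drop out of the normal equations. The one point you flag as delicate is indeed the only nontrivial step, and your marker-by-marker cancellation argument is the same as (and slightly more explicit than) the paper's.
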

\begin{proof}
    The update for $\beta_k^{(w)}$ requires to solve the following minimization problem
\begin{equation}
  \beta_k^{(w+1)} \in  \underset{\beta \in \R^q}{\argmin} \, -n^{-1}\sum_{i=1}^n  \tilde{\pi}_{ik}^{\theta^{(w)}} A_i^{3}\big(\beta, \Sigma_i^{(w)} \big).
\end{equation}
We have
\begin{align*}
 A_i^{3}(\beta, \Sigma_i^{(w)})
  & = \frac{1}{2}\E_{\theta^{(w)}} \big[\log \det(\Sigma_i^{(w)}) + (Y_i - U_i\beta - V_ib_i)^\top {\Sigma_i^{(w)}}^{-1}(Y_i - U_i\beta - V_ib_i) \, | \, \mathcal{D}_n\big]\\
& = \frac{1}{2}\log \det(\Sigma_i^{(w)}) + \frac{1}{2} \E_{\theta^{(w)}} \big[ (Y_i - U_i\beta - V_ib_i)^\top {\Sigma_i^{(w)}}^{-1}(Y_i - U_i\beta - V_ib_i) \, | \, \mathcal{D}_n\big]\\
& = \E_{\theta^{(w)}} \big[ - (Y_i - V_ib_i)^\top {\Sigma_i^{(w)}}^{-1} U_i\beta  +  \frac{1}{2} \beta^\top U_i^\top {\Sigma_i^{(w)}}^{-1} U_i \beta \, | \, \mathcal{D}_n\big] + \text{constants}
\\
& =  - (Y_i - V_i \E_{\theta^{(w)}} \big[b_i| \, \mathcal{D}_n\big])^\top {\Sigma_i^{(w)}}^{-1} U_i\beta  +  \frac{1}{2} \beta^\top U_i^\top {\Sigma_i^{(w)}}^{-1} U_i \beta \,  + \text{constants}
\end{align*}
(where we treat as constants all quantities independent of $\beta$). Then, the gradient of $A_i^{3}$ writes
\begin{align*}
  \frac{\partial A_i^{3}(\beta, \Sigma_i^{(w)})}{\partial \beta}
  &= - (Y_i - V_i \E_{\theta^{(w)}}[b_i | \mathcal{D}_n])^\top {\Sigma_i^{(w)}}^{-1} U_i +\beta^\top U_i^\top {\Sigma_i^{(w)}}^{-1}U_i.
\end{align*}
Given the form of $U_i$ and $V_i$ along with the fact that $\Sigma_i^{(w)}$ is a diagonal matrix, we can rewrite the gradient of $A_i^{3}$ as
\begin{equation*}
  \frac{\partial A_i^{3}(\beta, \Sigma_i^{(w)})}{\partial \beta} = - (Y_i - V_i \E_{\theta^{(w)}}[b_i | \mathcal{D}_n])^\top U_i {\tilde{\Sigma}^{(w)}}^{-1} +\beta^\top U_i^\top U_i {{\tilde\Sigma}^{(w)}}^{-1},
\end{equation*}
where $\tilde\Sigma^{(w)}$ is the diagonal matrix whose diagonal is ${({\phi_1^{(w)}{\textbf{1}_{q_1}}}^\top, \ldots, {\phi_L^{(w)} {\textbf{1}_{q_L}}}^\top)}^\top \in \R^{q}$. The closed-form update of $\beta_k^{(w)}$ is then obtained by canceling the gradient, that is
\begin{align*}
    & -n^{-1}\sum_{i=1}^n \tilde{\pi}_{ik}^{\theta^{(w)}} \frac{\partial A_i^{3}(\beta, \Sigma_i^{(w)})}{\partial \beta} = 0\\
    \Leftrightarrow \quad & \sum_{i=1}^n \tilde{\pi}_{ik}^{\theta^{(w)}} \Big((Y_i - V_i \E_{\theta^{(w)}}[b_i | \mathcal{D}_n])^\top U_i {\tilde{\Sigma}^{(w)}}^{-1} - \beta^\top U_i^\top U_i {{\tilde\Sigma}^{(w)}}^{-1}\Big) = 0 \\
    \Leftrightarrow \quad & \sum_{i=1}^n \tilde{\pi}_{ik}^{\theta^{(w)}} \Big((Y_i - V_i \E_{\theta^{(w)}}[b_i | \mathcal{D}_n])^\top U_i - \beta^\top U_i^\top U_i \Big) = 0 \\
    \Leftrightarrow \quad & \beta = \Big(\sum_{i=1}^n \tilde{\pi}_{ik}^{\theta^{(w)}} U_i^\top U_i\Big)^{-1}\Big(\sum_{i=1}^n \tilde{\pi}_{ik}^{\theta^{(w)}} U_i^\top (Y_i - V_i \E_{\theta^{(w)}}[b_i | \mathcal{D}_n])\Big).
\end{align*}
\end{proof}

\begin{lemma}
At step $w + 1$ of the EM algorithm, for any $j \in \{1, \dots, J\}$, given the the update of $\lambda_0(\tau_j)$ is
\begin{equation}
  \label{eq:update_lambda_0}
  \lambda_0^{(w+1)}(\tau_j)= \dfrac{\sum_{i=1}^n \Delta_i \ind{\{T_i=\tau_j\}}}{\sum_{i=1}^n \sum_{k=1}^{K} \tilde\pi_{ik}^{\theta^{(w)}} \exp \big(\psi_i(\tau_j)^\top \gamma_k^{(w+1)}\big) \ind{\{T_i \geq \tau_j\}} },
\end{equation}
where $\tilde\pi_{ik}^{\theta^{(w)}} $ is given by \eqref{eq:E_ind}.
\end{lemma}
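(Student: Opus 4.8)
The plan is to isolate the terms of $\cQ^\text{pen}_n(\theta, \theta^{(w)})$ that depend on $\lambda_0$ and minimize over the scalars $\lambda_0(\tau_1), \dots, \lambda_0(\tau_J)$. First I would observe that the penalization $\Omega(\theta)$ in \eqref{eq:pen-log-lik_ext} does not involve $\lambda_0$, so the update of $\lambda_0$ reduces to minimizing, over $\big(\lambda_0(\tau_j)\big)_{j=1}^J$ with $\lambda_0(\tau_j) > 0$, the quantity $-n^{-1}\sum_{i=1}^n\sum_{k=1}^K \tilde{\pi}_{ik}^{\theta^{(w)}}\, A_i^4\big(\gamma_k^{(w+1)}, \lambda_0\big)$, where $\gamma_k^{(w+1)}$ appears because $\gamma$ is updated before $\lambda_0$ in our ordering. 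By the expression of $A_i^4$ in \eqref{eq:density_T_i_Delta_i}, the only dependence on $\lambda_0$ is through $\Delta_i \log \lambda_0(T_i)$ and through $-\sum_{j=1}^J \lambda_0(\tau_j)\exp\big(\psi_i(\tau_j)^\top\gamma_k^{(w+1)}\big)\ind{\{\tau_j\le T_i\}}$.

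Next I would use two simplifications. Since $\sum_{k=1}^K \tilde{\pi}_{ik}^{\theta^{(w)}} = 1$ and $\Delta_i\log\lambda_0(T_i)$ does not depend on $k$, the first contribution collapses to $\sum_{i=1}^n \Delta_i\log\lambda_0(T_i)$; moreover, since $\Delta_i = 1$ forces $T_i$ to be one of the failure times $\tau_1, \dots, \tau_J$, this equals $\sum_{j=1}^J\big(\sum_{i=1}^n\Delta_i\ind{\{T_i=\tau_j\}}\big)\log\lambda_0(\tau_j)$. The double-sum term is linear in each $\lambda_0(\tau_j)$, so the objective fully decouples across $j$: up to additive constants, minimizing over $\lambda_0(\tau_j)$ amounts to minimizing over $x>0$ the function $f_j(x) = -n^{-1}\big(a_j\log x - c_j x\big)$ with $a_j = \sum_{i=1}^n\Delta_i\ind{\{T_i=\tau_j\}}$ and $c_j = \sum_{i=1}^n\sum_{k=1}^K\tilde{\pi}_{ik}^{\theta^{(w)}}\exp\big(\psi_i(\tau_j)^\top\gamma_k^{(w+1)}\big)\ind{\{\tau_j\le T_i\}}$.

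Finally I would note that $f_j$ is strictly convex on $(0,\infty)$, its second derivative being $n^{-1}a_j/x^2 > 0$ whenever $a_j > 0$, so its unique minimizer is obtained by cancelling $f_j'(x) = -n^{-1}\big(a_j/x - c_j\big)$, which gives $\lambda_0^{(w+1)}(\tau_j) = a_j/c_j$. Rewriting $\ind{\{\tau_j\le T_i\}}$ as $\ind{\{T_i\ge\tau_j\}}$ then yields exactly \eqref{eq:update_lambda_0}. There is no genuine obstacle beyond bookkeeping here; the only points requiring a little care are the use of $\sum_{k=1}^K \tilde{\pi}_{ik}^{\theta^{(w)}} = 1$, the observation that $\log\lambda_0(T_i)$ contributes to the $\tau_j$-coordinate only when $T_i=\tau_j$ (with $\Delta_i=1$), and tracking the update order so that the already-updated $\gamma_k^{(w+1)}$ is the one entering $c_j$.
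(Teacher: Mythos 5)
Your proposal is correct and follows essentially the same route as the paper: isolate the $\lambda_0$-dependent part of $-n^{-1}\sum_i\sum_k \tilde\pi_{ik}^{\theta^{(w)}}A_i^4(\gamma_k^{(w+1)},\lambda_0)$, use $\sum_k\tilde\pi_{ik}^{\theta^{(w)}}=1$ to collapse the $\Delta_i\log\lambda_0(T_i)$ term, and cancel the coordinatewise derivative in $\lambda_0(\tau_j)$. Your added remarks on the decoupling across $j$, the strict convexity in each coordinate, and the role of the update ordering are sound refinements of the same argument.
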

\begin{proof}
The update for $\lambda_0^{(w)}(\tau_j)$ requires to solve the minimization problem
\begin{equation}
  \label{eq:eq:minimization-baseline-hazard}
  \lambda_0^{(w+1)}(\tau_j) \in  \underset{\lambda_0 \in \R}{\argmin} \, -n^{-1} \sum_{i=1}^{n} \sum_{k=1}^{K} \tilde\pi_{ik}^{\theta^{(w)}} A^{4}_i(\gamma_k^{(w+1)}, \lambda_0).
\end{equation}
We have 
\begin{align*}
     &-n^{-1} \sum_{i=1}^{n} \sum_{k=1}^{K} \tilde\pi_{ik}^{\theta^{(w)}} A^{4}_i(\gamma_k^{(w+1)}, \lambda_0) \\
     & \quad = -n^{-1} \sum_{i=1}^{n} \sum_{k=1}^{K} \tilde\pi_{ik}^{\theta^{(w)}} \Big( \Delta_i \big(\log \lambda_0(T_i) + \psi_i(T_i)^\top \gamma_k^{(w+1)} \big) - \sum_{j=1}^{J} \lambda_0(\tau_j) \exp \big(  \psi_i(\tau_j)^\top \gamma_k^{(w+1)} \big) \ind{\{\tau_j \leq T_i\}} \Big) \\
    & \quad = -n^{-1} \sum_{i=1}^{n} \Delta_i \log \lambda_0(T_i) + n^{-1} \sum_{i=1}^{n} \sum_{k=1}^{K} \tilde\pi_{ik}^{\theta^{(w)}} \sum_{j=1}^{J} \lambda_0(\tau_j) \exp \big(  \psi_i(\tau_j)^\top \gamma_k^{(w+1)} \big) \ind{\{\tau_j \leq T_i\}} + \text{constants}
\end{align*}
(where we keep only the terms with $\tau_j$ for any $j \in \{1, \dots, J \}$). By taking the gradient of the previous expression over $\lambda_0(\tau_j)$ and setting it to zero, that is
\begin{equation}
  -\frac{n^{-1}}{\lambda_0}\sum_{i=1}^n  \Delta_i \ind{\{T_i=\tau_j\}}  + n^{-1} \sum_{i=1}^n\sum_{k=1}^{K} \tilde\pi_{ik}^{\theta^{(w)}} \exp \big(  \psi_i(\tau_j)^\top \gamma_k^{(w+1)}\big) \ind{\{T_i \geq \tau_j\}} = 0.
\end{equation}
We then obtain the update for $\lambda_0^{(w)}(\tau_j)$ as the desired result.
\end{proof}

Note that the closed-form update of $\lambda_0^{(w)}(\tau_j)$ is a Breslow-like estimator~\citep{breslow1972contribution} adapted to our model. Finally, recall that $\Sigma_i$ is a diagonal matrix with a diagonal of the form  $(\phi_1{\textbf{1}_{n_i^1}}^\top \cdots \phi_L {\textbf{1}_{n_i^L}}^\top)$. Estimating $\Sigma_i$ thus amounts to estimating $\phi_1, \dots, \phi_L$, whose updates are given in the following lemma.
\begin{lemma}
At step $w + 1$ of the EM algorithm, the update of $\phi_\ell$ is
\begin{align}
  \label{eq:update_phi}
  \phi_\ell^{(w+1)}= \frac{1}{\sum_{i=1}^n n_i^\ell } \sum_{i=1}^n \sum_{k=1}^{K} \tilde\pi_{ik}^{\theta^{(w)}} \Big( &\big(Y_i^\ell - U_i^\ell \beta_k^{\ell (w+1)}\big)^\top \big(Y_i^\ell - U_i^\ell \beta_k^{\ell (w+1)} - 2V_i^\ell \E_{\theta^{(w)}}[ b_i^\ell \,| \, \mathcal{D}_n]\big) \nonumber\\
  & + \Tr\big({V_i^\ell}^\top V_i^\ell \E_{\theta^{(w)}}[ b_i^\ell b_i^{\ell\top} \, | \, \mathcal{D}_n]\big)\Big),
\end{align}
where $\E_{\theta^{(w)}}[ b_i^\ell\, | \, \mathcal{D}_n]$ and 
$\E_{\theta^{(w)}}[ b_i^\ell b_i^{\ell\top} \, | \, \mathcal{D}_n]$ are obtained from \eqref{eq:E_b} and \eqref{eq:E_bbT}. 
\end{lemma}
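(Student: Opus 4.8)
The plan is to mirror the structure of the preceding closed-form M-step lemmas: set up the minimization problem that defines $\phi_\ell^{(w+1)}$, isolate the part of $\cQ_n^\text{pen}$ that depends on $\phi_\ell$, differentiate with respect to $\phi_\ell$, and solve the resulting scalar equation. Since the penalty $\Omega$ does not involve $\phi$, the update only sees the term $-n^{-1}\sum_{i=1}^n \sum_{k=1}^K \tilde\pi_{ik}^{\theta^{(w)}} A_i^3(\beta_k^{(w+1)}, \Sigma_i)$ from the expected complete log-likelihood, with $\beta_k^{(w+1)}$ already updated (this is where the ordering of the M-step matters). So the first step is to write
\begin{equation*}
  \phi_\ell^{(w+1)} \in \underset{\phi_\ell \in \R^+}{\argmin} \, -n^{-1}\sum_{i=1}^n \sum_{k=1}^K \tilde\pi_{ik}^{\theta^{(w)}} A_i^3\big(\beta_k^{(w+1)}, \Sigma_i\big).
\end{equation*}

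The second step is to exploit the block-diagonal structure of $U_i$, $V_i$ and the fact that $\Sigma_i$ is diagonal with blocks $\phi_\ell I_{n_i^\ell}$, so that the quadratic form in $A_i^3$ decouples across $\ell$: the $\ell$-th contribution is $\tfrac{1}{2}\big(n_i^\ell \log \phi_\ell + \phi_\ell^{-1} \E_{\theta^{(w)}}[ \|Y_i^\ell - U_i^\ell\beta_k^\ell - V_i^\ell b_i^\ell\|_2^2 \mid \cD_n]\big)$ plus terms independent of $\phi_\ell$. The third step is to expand the squared norm, push the expectation inside using linearity, and use the identity $\E[b_i^\ell b_i^{\ell\top}\mid\cD_n]$ for the quadratic term in $b_i^\ell$; concretely, $\E_{\theta^{(w)}}[\|Y_i^\ell - U_i^\ell\beta_k^{\ell(w+1)} - V_i^\ell b_i^\ell\|_2^2\mid\cD_n]$ equals $(Y_i^\ell - U_i^\ell\beta_k^{\ell(w+1)})^\top(Y_i^\ell - U_i^\ell\beta_k^{\ell(w+1)} - 2V_i^\ell\E_{\theta^{(w)}}[b_i^\ell\mid\cD_n]) + \Tr({V_i^\ell}^\top V_i^\ell \E_{\theta^{(w)}}[b_i^\ell b_i^{\ell\top}\mid\cD_n])$, using $\|V_i^\ell b_i^\ell\|_2^2 = \Tr(V_i^\ell b_i^\ell b_i^{\ell\top}{V_i^\ell}^\top)$ and cyclicity of the trace. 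Note that $\E_{\theta^{(w)}}[b_i^\ell\mid\cD_n]$ and $\E_{\theta^{(w)}}[b_i^\ell b_i^{\ell\top}\mid\cD_n]$ are the appropriate sub-vector/sub-block of the quantities in \eqref{eq:E_b} and \eqref{eq:E_bbT}.

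The final step is to differentiate the objective $-n^{-1}\sum_i\sum_k \tilde\pi_{ik}^{\theta^{(w)}}\cdot\tfrac{1}{2}\big(n_i^\ell\log\phi_\ell + \phi_\ell^{-1} R_{ik}^\ell\big)$, where $R_{ik}^\ell$ is the expected residual just computed, with respect to $\phi_\ell$, yielding $\tfrac{n^{-1}}{2}\sum_i\sum_k\tilde\pi_{ik}^{\theta^{(w)}}\big(-n_i^\ell\phi_\ell^{-1} + \phi_\ell^{-2}R_{ik}^\ell\big)$; setting this to zero and using $\sum_{k=1}^K\tilde\pi_{ik}^{\theta^{(w)}} = 1$ to simplify the $n_i^\ell$ term gives $\phi_\ell = \big(\sum_i n_i^\ell\big)^{-1}\sum_i\sum_k\tilde\pi_{ik}^{\theta^{(w)}} R_{ik}^\ell$, which is exactly \eqref{eq:update_phi}. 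I expect no real obstacle here — the proof is a routine variational calculation entirely parallel to the $\beta_k$ and $\lambda_0$ lemmas; the only points requiring a little care are keeping track of which quantities are already at step $w+1$ (namely $\beta_k$) versus step $w$, correctly extracting the $\ell$-block from the global $b_i$ expectations, and using $\sum_k\tilde\pi_{ik}^{\theta^{(w)}}=1$ to collapse the logarithmic term into the stated denominator.
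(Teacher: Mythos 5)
Your proposal is correct and follows essentially the same route as the paper's proof: isolate the $A_i^3$ term, use the block-diagonal structure of $U_i$, $V_i$, $\Sigma_i$ to decouple the objective across $\ell$ into $\tfrac12\big(n_i^\ell\log\phi_\ell + \phi_\ell^{-1}R_{ik}^\ell\big)$, expand the expected squared residual via linearity and the trace identity, and cancel the gradient using $\sum_k\tilde\pi_{ik}^{\theta^{(w)}}=1$. The only cosmetic difference is that you keep the factor $\tfrac12$ consistently on both terms (the paper drops it on the quadratic part in an intermediate display), which does not affect the zero of the gradient.
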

\begin{proof}
The update for $\phi^{(w)}$ requires to solve the following minimization problem
\begin{equation}
  \label{eq:eq:minimization-phi}
  \phi^{(w+1)} \in  \underset{\phi \in \R^L}{\argmin}  \, -n^{-1} \sum_{i=1}^n \sum_{k=1}^{K} \tilde\pi_{ik}^{\theta^{(w)}} A^{3}_i(\beta_k^{(w+1)}, \text{Diag}_i(\phi)),
\end{equation}
where we denote by $\text{Diag}_i(\phi)$ the diagonal matrix $\Sigma_i$ to make clear its dependence on $\phi$. We let
${M_{ik}}^{(w+1)} = U_i{\beta_k}^{(w+1)} + V_ib_i$, $M_{ik}^{\ell (w+1)} = U_i^\ell \beta_k^{\ell(w+1)} + V_i^\ell b_i^\ell$. Then, taking advantage of the structure of $\text{Diag}_i(\phi)$, we have
\begin{align*}
    A^{3}_i(\beta_k^{(w+1)}, \text{Diag}_i(\phi)) & =\frac{1}{2} \log \det(\text{Diag}_i(\phi)) + \E_{\theta^{(w)}} \big[(Y_i - M_{ik}^{(w+1)})^\top \text{Diag}_i(\phi)^{-1}(Y_i - M_{ik}^{(w+1)}) \, | \, \mathcal{D}_n\big] \\
    & = \frac{1}{2} \sum_{\ell=1}^{L} n_i^{\ell} \log \phi_\ell + \sum_{\ell=1}^{L} \frac{1}{\phi_\ell}\E_{\theta^{(w)}} \big[ (Y_i^\ell - M_{ik}^{\ell (w+1)})^\top (Y_i^\ell - M_{ik}^{\ell (w+1)}) \, | \, \mathcal{D}_n\big].
\end{align*}
Then, the gradient of $A^{3}_i$ along $\phi_\ell$ is simply
\begin{equation*}
  \frac{\partial  A^{3}_i(\beta_k^{(w+1)}, \text{Diag}_i(\phi))}{\partial \phi_\ell} = \frac{n_i^\ell}{\phi_\ell}  - \frac{1}{\phi_\ell^2} \E_{\theta^{(w)}} \big[ \big(Y_i^\ell -  M_{ik}^{\ell (w+1)}\big)^\top \big(Y_i^\ell - M_{ik}^{\ell(w+1)}\big) \, | \, \mathcal{D}_n \big],
\end{equation*}
The closed-form update of $\phi_\ell^{(w)}$ is then obtained by setting
\begin{align*}
    & -n^{-1} \sum_{i=1}^n \sum_{k=1}^{K} \tilde\pi_{ik}^{\theta^{(w)}}  \frac{\partial A^{3}_i(\beta_k^{(w+1)}, \text{Diag}_i(\phi))}{\partial \phi_\ell} = 0\\
     \Leftrightarrow \quad &-n^{-1}\sum_{i=1}^n \sum_{k=1}^{K} \tilde{\pi}_{ik}^{\theta^{(w)}} \Big(\frac{n_i^\ell}{\phi_\ell}  - \frac{1}{\phi_\ell^2} \E_{\theta^{(w)}} \big[ \big(Y_i^\ell -  M_{ik}^{\ell (w+1)}\big)^\top \big(Y_i^\ell - M_{ik}^{\ell(w+1)}\big) \, | \, \mathcal{D}_n \big]\Big) = 0 \\
    \Leftrightarrow \quad & \sum_{i=1}^n \sum_{k=1}^{K} \tilde\pi_{ik}^{\theta^{(w)}} \Big(n_i^\ell \phi_\ell^{(w+1)} - \E_{\theta^{(w)}} \big[ \big(Y_i^\ell -  M_{ik}^{\ell (w+1)}\big)^\top \big(Y_i^\ell - M_{ik}^{\ell(w+1)}\big) \, | \, \mathcal{D}_n \big] \Big)= 0.
\end{align*}
The result follows from the fact that
\begin{align*}
 &\E_{\theta^{(w)}} \big[ \big(Y_i^\ell -  M_{ik}^{\ell (w+1)}\big)^\top \big(Y_i^\ell - M_{ik}^{\ell(w+1)}\big) \, | \, \mathcal{D}_n \big] \\
  & \quad = \E_{\theta^{(w)}} \big[\big(Y_i^\ell - U_i^\ell \beta_k^{\ell (w+1)}\big)^\top \big(Y_i^\ell - U_i^\ell \beta_k^{\ell(w+1)} -2V_i^\ell b_i^\ell\big) + b_i^\top V_i^\top V_i b_i \, | \, \mathcal{D}_n \big] \\
  & \quad =\big(Y_i^\ell - U_i^\ell {\beta_k^\ell}^{(w+1)}\big)^\top \big(Y_i^\ell - U_i^\ell {\beta_k^\ell}^{(w+1)}-2V_i^\ell \E_{\theta^{(w)}}[ b_i^\ell \, | \, \mathcal{D}_n]\big)+ \Tr\big({V_i^\ell}^\top V_i^\ell \E_{\theta^{(w)}}[ b_i^\ell b_i^{\ell\top} \, | \, \mathcal{D}_n]\big)\Big).
\end{align*}
\end{proof}

\subsection{M-step: Update $\xi$}
\label{sec:step-update-xi}
In $\cQ_n(\theta, \theta^{(w)})$, the parameter $\xi$ appears only in the term 
\begin{align*}
     &-n^{-1} \sum_{i=1}^{n} \sum_{k=1}^{K} \tilde\pi_{ik}^{\theta^{(w)}} A^{2}_{i, k}(\xi) \\
     & \quad = -n^{-1} \sum_{i=1}^{n} \sum_{k=1}^{K} \tilde\pi_{ik}^{\theta^{(w)}} \log \Big(\dfrac{e^{X_i^\top\xi_k}}{\sum_{j=1}^{K} e^{X_i^\top\xi_j}} \Big) \\
    & \quad = -n^{-1} \sum_{i=1}^{n} \bigg( \tilde\pi_{ik}^{\theta^{(w)}} \log \Big(1 + \sum_{\substack{ j \neq k \\ j=1 }}^{K}e^{X_i^\top(\xi_j - \xi_k)}\Big) + \sum_{\substack{m \neq k \\ m=1}}^{K} \tilde\pi_{im}^{\theta^{(w)}} \log \Big(1 + e^{X_i^\top(\xi_k - \xi_m)} + \sum_{\substack{j \neq k, j \neq m\\ j=1}}^{K}e^{X_i^\top(\xi_j - \xi_m)}\Big)\bigg).
\end{align*}
For $k \in \{1, \dots, K\}$, the update for $\xi_k^{(w)}$ requires to solve the minimization problem
\begin{equation}
  \label{eq:minimization-xi_k_ext}
  \xi_k^{(w+1)} \in \underset{\xi \in \R^p}{\argmin} \, \mathcal{F}_{1,k}(\xi) + \zeta_{1,k} \Omega_1(\xi),
\end{equation}
where $\mathcal{F}_{1,k}$ is defined by
\begin{align*}
\mathcal{F}_{1,k}(\xi) = n^{-1} \sum_{i=1}^n  \bigg( & \tilde\pi_{ik}^{\theta^{(w)}} \log \Big(1 + \sum_{\substack{ j \neq k \\j=1 }}^{K}e^{X_i^\top(\xi_j - \xi)}\Big) \\
& + \sum_{\substack{m \neq k \\ m=1}}^{K} \tilde\pi_{im}^{\theta^{(w)}} \log \Big(1 + e^{X_i^\top(\xi - \xi_m)} + \sum_{\substack{j \neq k, j \neq m\\ j=1}}^{K}e^{X_i^\top(\xi_j - \xi_m)}\Big)\bigg)
\end{align*}
and $\Omega_1$ is the elastic net regularization. We choose to solve~\eqref{eq:minimization-xi_k_ext} using the L-BFGS-B algorithm~\citep{zhu1997algorithm} which belongs to the class of quasi-Newton optimization routines and solves the given minimization problem by computing approximations of the inverse Hessian matrix of the objective function. It can deal with differentiable convex objectives with box constraints.

In order to use it with $l_1$ part of the elastic net regularization, which is not differentiable, we use the trick borrowed from \citet{andrew2007scalable}: for $a \in \R$, write $|a| = a^+ + a^-$, where $a^+$ and $a^-$ are respectively the positive and negative part of $a$, and add the constraints $a^+ \geq 0$ and $a^- \geq 0$.
Namely, we rewrite the minimization problem~\eqref{eq:minimization-xi_k_ext} as the following differentiable problem with box constraints
\begin{equation}
  \label{eq:sub-problem-xi_k}
  \begin{split}
    \text{minimize}& \quad \quad \mathcal{F}_{1,k}(\xi^+ - \xi^-) + \zeta_{1,k} \big((1 - \eta) \sum_{j=1}^p (\xi_j^+ + \xi_j^-) + \dfrac \eta 2 \norm{{\xi}^+ - {\xi}^-}_2^2 \big) \\
    \text{subject to}& \quad \quad \xi_j^+ \geq 0 \text{ and } \xi_j^- \geq 0 \text{ for } j \in \{1, \dots, p\}
  \end{split} 
\end{equation}
where $\xi^\pm = (\xi_1^\pm, \ldots, \xi_p^\pm)^\top$. The L-BFGS-B solver requires the exact value of the gradient, which is easily given by
\begin{align}
  \label{eq:grad-xi_k}
  \frac{\partial \mathcal{F}_{1,k}(\xi)}{\partial \xi} = -n^{-1} \sum_{i=1}^n  \big(\tilde\pi_{ik}^{\theta^{(w)}} - \dfrac{e^{X_i^\top\xi}}{e^{X_i^\top\xi} + \sum\limits_{\substack{j \neq k\\ j=1}}^{K}e^{X_i^\top\xi_j}} \big) X_i^\top.
\end{align}
In practice, we use the \texttt{Python} solver \texttt{fmin\_l\_bfgs\_b} from \texttt{scipy.optimize}~\citep{virtanen2020scipy}.

\subsection{M-step: Update $\gamma$}
\label{sec:M-step-update-gamma}

In $\cQ_n(\theta, \theta^{(w)})$, for $k \in \{1, \dots, K\}$, $\gamma_k$ appears only in the term 
\begin{align*}
     &-n^{-1} \sum_{i=1}^{n} \tilde\pi_{ik}^{\theta^{(w)}} A^{4}_i(\gamma_k, \lambda_0) \\
     & \quad = -n^{-1} \sum_{i=1}^{n} \tilde\pi_{ik}^{\theta^{(w)}} \Big( \Delta_i \big(\log \lambda_0(T_i) + \psi_i(T_i)^\top \gamma_k \big) - \sum_{j=1}^{J} \lambda_0(\tau_j) \exp \big(  \psi_i(\tau_j)^\top \gamma_k \big) \ind{\{\tau_j \leq T_i\}} \Big) \\
    & \quad = -n^{-1} \sum_{i=1}^n \tilde\pi_{ik}^{\theta^{(w)}} \Big(\Delta_i \psi_i(T_i)^\top \gamma_k - \sum_{j=1}^J \lambda_0(\tau_j) \exp \big(\psi_i(\tau_j)^\top \gamma_k\big) \ind{\{\tau_j \leq T_i\}} \Big) + \text{constants}.
\end{align*}
Then the update for $\gamma_k^{(w)}$ requires to solve the following minimization problem
\begin{equation}
  \label{eq:minimization-gamma_k_ext}
\gamma_k^{(w+1)} \in \underset{\gamma \in \R^{LM}}{\argmin} \, \mathcal{F}_{2,k}(\gamma) + \zeta_{2,k} \Omega_2(\gamma), 
\end{equation}
where $\mathcal{F}_{2,k}$ is defined by
\[\mathcal{F}_{2,k}(\gamma) = -n^{-1} \sum_{i=1}^n \tilde\pi_{ik}^{\theta^{(w)}} \Big(\Delta_i \psi_i(T_i)^\top \gamma - \sum_{j=1}^J \lambda_0^{(w)}(\tau_j) \exp \big(\psi_i(\tau_j)^\top \gamma\big) \ind{\{\tau_j \leq T_i\}} \Big) \]
and $\Omega_2$ is the sparse group lasso regularization. We choose to solve problem~\eqref{eq:minimization-gamma_k_ext} using the iterative soft-thresholding algorithm (ISTA), which is a proximal gradient descent algorithm~\citep{beck2009fast}. In our context, this method requires the gradient of $\mathcal{F}_{2,k}$ as well as the proximal operator~\citep{moreau1962fonctions} of the sparse group lasso. We refer to the proof of~\citet[Theorem 1]{yuan2011efficient} to show that the proximal operator of the sparse group lasso can be expressed as the composition of the group lasso and the lasso proximal operators, which are both well known analytically~\citep{bach2012optimization} and tractable. The gradient of $\mathcal{F}_{2,k}$ is here given by
\begin{equation}
    \label{eq:grad-gamma-k}
  \frac{\partial \mathcal{F}_{2,k}(\gamma)}{\partial \gamma} = -n^{-1} \sum_{i=1}^n \tilde\pi_{ik}^{\theta^{(w)}} \Big( \Delta_i - \sum_{j=1}^J \lambda_0^{(w)}(\tau_j) \exp \big(\psi_i(\tau_j)^\top \gamma \big) \ind{\{T_i \geq \tau_j\}} \Big) \psi_i(T_i)^\top.
\end{equation}
We use the \texttt{Python} library \texttt{copt} for the implementation of proximal gradient descent~\citep{copt} and we propose in our \texttt{FLASH} package a first \texttt{Python} implementation for the proximal operator of the sparse group lasso.

\subsection{Convex optimization problems with respect to $\xi$ and $\gamma$}
\label{sec:convex-prb}
\begin{lemma}
\label{lemma:strictly-convex} The optimization problems defined in \eqref{eq:minimization-xi_k_ext} and \eqref{eq:minimization-gamma_k_ext} are convex.
\end{lemma}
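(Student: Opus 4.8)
In both \eqref{eq:minimization-xi_k_ext} and \eqref{eq:minimization-gamma_k_ext} the objective splits as a smooth term ($\mathcal{F}_{1,k}$ or $\mathcal{F}_{2,k}$) plus a penalty ($\zeta_{1,k}\Omega_1$ or $\zeta_{2,k}\Omega_2$). Since a sum of convex functions is convex and $\zeta_{1,k},\zeta_{2,k}\ge 0$, it suffices to show that each of the four building blocks $\mathcal{F}_{1,k},\ \Omega_1,\ \mathcal{F}_{2,k},\ \Omega_2$ is convex. The whole argument is standard convex analysis; the only care needed is to make the affine dependence on the optimization variable explicit so that the relevant composition rules apply.

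For $\mathcal{F}_{1,k}$: when block $\xi_k$ is updated, the other blocks $\xi_j$, $j\ne k$, are fixed, so each exponent $X_i^\top(\xi_j-\xi)$, $X_i^\top(\xi-\xi_m)$, $X_i^\top(\xi_j-\xi_m)$ is an affine function of $\xi$ (the last one being constant). Every summand of $\mathcal{F}_{1,k}$ is therefore of the form $c\cdot\log\bigl(1+\sum_t e^{a_t(\xi)}\bigr)$ with affine $a_t$ and weight $c\in\{\tilde\pi_{ik}^{\theta^{(w)}},\tilde\pi_{im}^{\theta^{(w)}}\}\ge 0$ (these are conditional probabilities, cf.\ \eqref{eq:E_ind}). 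Writing $\log\bigl(1+\sum_t e^{a_t}\bigr)=\log\bigl(e^0+\sum_t e^{a_t}\bigr)$ exhibits this as the log-sum-exp function precomposed with an affine map; log-sum-exp is convex, convexity is preserved under affine precomposition, and a nonnegative linear combination of convex functions is convex, so $\mathcal{F}_{1,k}$ is convex. The penalty $\Omega_1(\xi)=(1-\eta)\norm{\xi}_1+\tfrac\eta2\norm{\xi}_2^2$ is convex, being a nonnegative combination of the $\ell_1$ norm and the squared $\ell_2$ norm, with $\eta\in[0,1]$.

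For $\mathcal{F}_{2,k}$: the first contribution, $-n^{-1}\sum_i \tilde\pi_{ik}^{\theta^{(w)}}\Delta_i\,\psi_i(T_i)^\top\gamma$, is linear in $\gamma$, hence convex. After distributing the outer minus sign, the second contribution becomes $+n^{-1}\sum_i\sum_j \tilde\pi_{ik}^{\theta^{(w)}}\lambda_0^{(w)}(\tau_j)\,\ind{\{\tau_j\le T_i\}}\exp\bigl(\psi_i(\tau_j)^\top\gamma\bigr)$; here $\exp$ composed with the affine map $\gamma\mapsto\psi_i(\tau_j)^\top\gamma$ is convex, and every scalar multiplier $n^{-1}$, $\tilde\pi_{ik}^{\theta^{(w)}}$, $\lambda_0^{(w)}(\tau_j)$ and $\ind{\{\tau_j\le T_i\}}$ is nonnegative --- in particular $\lambda_0^{(w)}(\tau_j)\ge 0$ because the Breslow-type update \eqref{eq:update_lambda_0} is a ratio of nonnegative quantities. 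Hence this contribution is again a nonnegative combination of convex functions, and $\mathcal{F}_{2,k}$ is convex. Finally $\Omega_2(\gamma)=(1-\tilde\eta)\norm{\gamma}_1+\tilde\eta\sum_{\ell=1}^L\norm{\gamma^\ell}_2$ is convex as a nonnegative combination of the $\ell_1$ norm and the seminorms $\gamma\mapsto\norm{\gamma^\ell}_2$ (each the $\ell_2$ norm of a coordinate projection of $\gamma$), with $\tilde\eta\in[0,1]$. Adding the two convex pieces in each case concludes.

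I do not expect a real obstacle here: the proof is bookkeeping with the composition rules for convexity. The two spots deserving attention are (i) spelling out the affine dependence so that the log-sum-exp rule (for $\xi$) and the $\exp\circ\text{affine}$ rule (for $\gamma$) apply verbatim, and (ii) tracking the leading minus sign in $\mathcal{F}_{2,k}$, which is precisely what turns the concave $-\exp$ into the convex $+\exp$, combined with the nonnegativity of $\lambda_0^{(w)}(\tau_j)$ and of the $\tilde\pi_{ik}^{\theta^{(w)}}$. Note that only convexity --- not strict convexity --- is claimed and needed: the $\tfrac\eta2\norm{\xi}_2^2$ term would in fact make the $\xi$-objective strictly convex whenever $\eta>0$, but the $\gamma$-objective need not be strictly convex since neither $\mathcal{F}_{2,k}$ (for a rank-deficient design) nor $\Omega_2$ is.
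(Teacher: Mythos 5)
Your proof is correct, but it follows a genuinely different route from the paper's. The paper differentiates twice: it writes down the Hessians of $\mathcal{F}_{1,k}$ and $\mathcal{F}_{2,k}$ explicitly (a weighted sum $n^{-1}\sum_i c_i X_iX_i^\top$ with weights $c_i = p_i(1-p_i)\ge 0$ coming from the softmax probabilities for the $\xi$-problem, and an analogous nonnegatively weighted sum of rank-one matrices built from the $\psi_i$ for the $\gamma$-problem), asserts that these matrices are positive definite, adds that the penalties are convex as nonnegative combinations of norms, and concludes --- in fact it concludes \emph{strict} convexity, which is more than the lemma states. Your argument avoids second derivatives entirely and uses composition rules: log-sum-exp and $\exp$ precomposed with affine maps, nonnegative weighting by the $\tilde\pi_{ik}^{\theta^{(w)}}$ and $\lambda_0^{(w)}(\tau_j)$, and convexity of norms. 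What your route buys is robustness: the Hessians in the paper are in general only positive \emph{semi}-definite (a sum $\sum_i c_i X_iX_i^\top$ with $c_i\ge 0$ need not have full rank, e.g.\ when $n<p$ or when the $X_i$ fail to span $\R^p$, and likewise for the terms built from the $\psi_i$), so the paper's claim of membership in the positive-definite cone and its closing assertion of strict convexity are overstatements, whereas your composition argument delivers exactly the convexity the lemma claims with no rank hypotheses. Your final remark --- that only convexity is needed and that the $\gamma$-objective can fail to be strictly convex for a rank-deficient design --- pinpoints precisely where the paper's own proof is too strong. What the paper's route offers in exchange is the explicit Hessian formula, which could in principle serve a second-order method, though it is not reused anywhere.
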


\begin{proof}
\label{sec:proof-lemma-convex}
Given first order derivative of  $\mathcal{F}_{1,k}(\xi)$ in $\eqref{eq:grad-xi_k}$, we show that the second order derivative of $\mathcal{F}_{1,k}(\xi)$ is positive definite
\begin{equation*}
  \frac{\partial^2 \mathcal{F}_{1,k}(\xi)}{\partial \xi \partial \xi^\top} = n^{-1} \sum_{i=1}^n \Big(\dfrac{e^{X_i^\top\xi}}{e^{X_i^\top\xi} + \sum\limits_{\substack{j \neq k\\ j=1}}^{K}e^{X_i^\top\xi_j}}\Big)\Big(1 - \dfrac{e^{X_i^\top\xi}}{e^{X_i^\top\xi} + \sum\limits_{\substack{j \neq k\\ j=1}}^{K}e^{X_i^\top\xi_j}}\Big) X_iX_i^\top \in S_{++}^p.
\end{equation*}

Given first order derivative of  $\mathcal{F}_{2,k}(\gamma)$ in $\eqref{eq:grad-gamma-k}$, we show that the second order derivative of $\mathcal{F}_{2,k}(\gamma)$ is positive definite
\begin{equation*}
  \frac{\partial^2 \mathcal{F}_{2,k}(\gamma)}{\partial \gamma \partial \gamma^\top} = n^{-1} \sum_{i=1}^n \Big( \tilde\pi_{ik}^{\theta^{(w)}} \sum_{j=1}^J \lambda_0^{(w)}(\tau_j) \exp \big(\psi_i(\tau_j)^\top \gamma \big) \ind{\{T_i \geq \tau_j\}}\Big) \psi_i(T_i) \psi_i(T_i)^\top \in S_{++}^{LM}.
\end{equation*}
As we already defined the elastic net and sparse group lasso regularization
\[\Omega_1(\xi) =  (1-\eta)\norm{\xi}_1 + \dfrac\eta2 \norm{\xi}_2^2 \quad \text{and} \quad \Omega_2(\gamma) = (1-\tilde{\eta})\norm{\gamma}_1 + \tilde{\eta} \sum_{\ell=1}^L \norm{\gamma^\ell}_2, \]
with $(\eta, \tilde{\eta}) \in [0, 1]^2$. Note that every norm is convex and a non-negative weighted sum of convex functions is convex \citep{boyd2004convex} then $\Omega_1(\xi)$ and  $\Omega_2(\gamma)$ are convex functions. Therefore, $\mathcal{F}_{1,k}(\xi) + \Omega_1(\xi)$ and $\mathcal{F}_{2,k}(\gamma) + \Omega_2(\gamma)$ are strictly convex.
\end{proof}

\subsection{The extended EM algorithm}
Algorithm~\ref{alg:ext-EM} below describes the main steps of our proposed EM algorithm.
\begin{algorithm}
\caption{The extended EM algorithm for FLASH inference} \label{alg:ext-EM}
\begin{algorithmic}[1]
    \Data Training data $\cD_n$; tuning hyper-parameters $(\zeta_{1,k}, \zeta_{2,k})_{k \in \{1,\ldots,K\}}$
    \Input maximum iteration $W$, tolerance $\varepsilon$
    \Output Last parameters $\hat \theta \in \R^P$
    \State Initialize parameters $\theta^{(0)} \in \R^P$\; 
    \For{$w = 1,\ldots, W$}
    \Statex \hspace{.48cm}{\textbf{E-step:}}
    \State Compute $(\E_{\theta^{(w)}}[ b_i | T_i, \Delta_i, Y_i])_{i \in \{1, \dots, n\}}$, $(\E_{\theta^{(w)}}[ b_i b_i^\top | T_i, \Delta_i, Y_i])_{i \in \{1, \dots, n\}}$, $(\widetilde{\pi}_{ik}^{\theta^{(w)}})_{\substack{i \in \{1, \dots, n\} \\ k \in \{1, \dots, K\}}}$
    \State  Compute $(\E_{\theta^{(w)}}[ b_i b_i^\top | T_i, \Delta_i, Y_i])_{i \in \{1, \dots, n\}}$
    \State{Compute $(\widetilde{\pi}_{ik}^{\theta^{(w)}})_{\substack{i \in \{1, \dots, n\} \\ k \in \{1, \dots, K\}}}$}
    \Statex \hspace{.48cm}{\textbf{M-step:}}
    \State Update $D^{(w+1)}$\;
    \State Update $(\xi_k^{(w+1)})_{k \in \{1, \dots, K\}}$ with L-BFGS-B\;
    \State Update $(\beta_k^{(w+1)})_{k \in \{1, \dots, K\}}$ \;
    \State Update $(\gamma_k^{(w+1)})_{k \in \{1, \dots, K\}}$ with proximal gradient descent \;
    \State Update $\lambda_0^{(w+1)}$ and $\phi^{(w+1)}$ \;
    \If{$\big(\cL^\text{pen}_n(\theta^{(w+1)}) - \cL^{\text{pen}}_n(\theta^{(w)}) \big)/\cL^\text{pen}_n(\theta^{(w)}) < \varepsilon$}
        \Statex \qquad \quad \textbf{break}
    \EndIf
    \EndFor
    \State \textbf{Return} {$\hat \theta = \theta^{(w+1)}$}
\end{algorithmic}
\end{algorithm}
\subsection{Monotone convergence}
\label{sec:monotone_converge}

By denoting $\zeta_p^{(w)} = (\theta_1^{(w+1)}, \dots, \theta_p^{(w+1)}, \theta_{p+1}^{(w)}, \dots, \theta_P^{(w)})$ and from the definition of our proposed EM algorithm, at step $w+1$, we have
    \begin{equation}
        \cQ^\text{pen}_n(\theta^{(w)}, \theta^{(w)}) \geq  \cQ^\text{pen}_n(\zeta_1^{(w)}, \theta^{(w)}) \geq \dots \geq \cQ^\text{pen}_n(\zeta_{P-1}^{(w)}, \theta^{(w)})
        \geq \cQ^\text{pen}_n(\theta^{(w+1)}, \theta^{(w)}).
    \end{equation}
Then, we are in a generalized EM (GEM) setting~\citep{dempster1977maximum}, where
\begin{equation}
    \cQ^\text{pen}_n(\theta^{(w + 1)}, \theta^{(w)}) \leq 
    \cQ^\text{pen}_n(\theta^{(w)}, \theta^{(w)}).
\end{equation}
For such algorithms, we refer to monotonicity of the likelihood property from the book of \citet[Section 3.3]{mclachlan2007algorithm} to show that the objective function~\eqref{eq:pen-log-lik_ext} decreases at each iteration, namely 
\[\cL^{\text{pen}}_n(\theta^{(w+1)}) \leq \cL^{\text{pen}}_n(\theta^{(w)}).\]

\section{Mathematical details of JLCMs and SREMs}
\label{sec:SREM_JLCM}
Given our notation used in the main body of the paper, we define here the sub-models of the two main approaches of joint models: JLCMs and SREMs.

\paragraph{JLCMs}
It assumes that the population is heterogeneous and that there are homogeneous latent classes that share the same marker trajectories and the same prognosis. The latent class membership probability is assumed to take the form of multinomial logistic regression
\begin{equation*}
  \P[g_i=k \, | \, X_i] = \dfrac{e^{X_i^\top\xi_k}}{\sum_{j=1}^{K}e^{X_i^\top\xi_j}}.
\end{equation*}
The dependence between the time-to-event and the longitudinal
marker is fully captured by a latent class structure. There are no shared associations between the longitudinal and survival models. Given the latent class membership, each submodel is assumed to be independent. If we choose Gaussian linear model for longitudinal markers and Cox relative risk model for the time-to-event, we have
\[y_i^{\ell}(t^{\ell}_{ij}) \, | \,  b_i^\ell, \, g_i=k \sim \mathcal{N}(m_{ik}^{\ell}(t^{\ell}_{ij}), \phi_\ell) \quad \text{and} \quad \lambda(t \, | \, g_i = k) = \lambda_0(t) \exp \Big(X_i^\top {\gamma_k}\Big),\]
where $\gamma_k$ is the $p$-vector of unknown parameters.
We consider the implementation of JLCMs in R package \texttt{LCMM} (function \texttt{mpjlcmm}). In this context, the predictive marker for subject $i$ at time $s_i$ is
\begin{equation*}
  \widehat \cR_{ik}(s_i) = \dfrac{\P_{\hat{\theta}}(g_i=k) f_{\hat{\theta}}(T_i = s_i, \Delta_i = 0, \cY_i(s_i^{-}) | g_i=k)}{\sum_{j=1}^{K} \P_{\hat{\theta}}(g_i=j) f_{\hat{\theta}}(T_i = s_i, \Delta_i = 0, \cY_i(s_i^{-}) | g_i=j)},
\end{equation*}
where the density $f_{\hat{\theta}}$ are the one corresponding to a JLCM model.

\paragraph{SREMs} It assumes a homogeneous population of subjects and the dependency between the time-to-event and the longitudinal
marker is influenced by some random effects learned in a linear mixed model. If we choose Gaussian linear model for longitudinal markers, we have
\begin{equation*}
    y_i^{\ell}(t^{\ell}_{ij}) \, | \,  b_i^\ell \sim \mathcal{N}(m_i^{\ell}(t^{\ell}_{ij}), \phi_\ell),
\end{equation*}
where $m_i^\ell(t^{\ell}_{ij}) = u^\ell(t^{\ell}_{ij})^\top\beta^\ell + v^\ell(t^{\ell}_{ij})^\top b_i^\ell$ and $\beta^\ell$ is a $r_\ell$-vector of unknown fixed effect parameters. The random effects are included as covariates in the survival model through the shared association functions $\phi$. If we choose Cox relative risk model for the time-to-event, we have
\begin{equation*}
  \lambda(t \, | \, g_i = k) = \lambda_0(t) \exp \Big(X_i^\top \gamma_0 + \sum_{\ell=1}^L \phi(b_i^\ell, t_i)^\top {{\gamma^\ell}}\Big),
\end{equation*}
where $\gamma_0$ and $\gamma^\ell$ are unknown parameters.

We consider the implementation of SREMs in R package \texttt{JMbayes}. In this context, the predictive marker for subject $i$ at time $s_i$ is

\[\widehat \cR_i(s_i) = \exp \Big( X_i^\top \gamma_0 + \sum_{\ell=1}^L \phi(b_i^\ell, s_i)^\top {{\gamma^\ell}} \Big),\] 
where the shared associations takes the form $\phi(b_i^\ell, s_i) = u^\ell(s_i)^\top\beta^\ell + v^\ell(s_i)^\top b_i^\ell$.

\section{Experimental details and additional experiments}
\label{sec:experiment_detail}
\subsection{Initialization}
\label{sec:initialization}

\subsubsection{Initialization}
In order to help convergence,  $\theta^{(0)}$ should be well chosen. We then give some details about the starting point $\theta^{(0)}$ of this algorithm. For all $k = 1, \ldots, K$, we first choose $\xi_k^{(0)} = \mathbf{0}_d$ and ${\gamma_k}^{(0)} = 0.01 * \mathbf{1}_{LA}$. Then, we initialize $\lambda_0^{(0)}$ like if there is no latent classes ($\gamma_{1}^{(0)} = \cdots = \gamma_{K}^{(0)}$) with a standard Cox proportional hazards regression with time-independent features. Finally, the longitudinal submodel parameters $\beta_k^{(0)}$, $D^{(0)}$ and $\phi^{(0)}$ are initialized -- again like if there is no latent classes ($\beta_1^{(0)} = \cdots = \beta_{K}^{(0)}$) -- using a multivariate linear mixed model with an explicit EM algorithm, being itself initialized with univariate fits.

\subsubsection{Multivariate linear mixed model}
\label{sec:MLMM}

Let us derive here the explicit EM algorithm for the multivariate Gaussian linear mixed model used to initialized the longitudinal parameters $\beta_k^{(0)}$, $D^{(0)}$ and $\phi^{(0)}$ in the proposed EM algorithm in Section~\ref{sec:initialization}, acting as if there is no latent classes ($\beta_1^{(0)} = \cdots = \beta_{K}^{(0)}$).
For the sake of simplicity, let us denote here 
\[\theta = (\beta^\top, D, \phi^\top)^\top \in \R^P\]
the parameter vector to infer.
The conditional distribution of $Y_i|b_i$ then writes
\[f(Y_i|b_i ; \theta) = -(2\pi)^{-\frac{n_i}{2}} \det(\Sigma_i)^{-\frac{1}{2}} \exp^{-\frac{1}{2}(Y_i - M_i)^\top \Sigma_i^{-1}(Y_i - M_i)},\]
where ${M_i = U_i\beta + V_ib_i}$ in this context. The negative complete log-likelihood then writes
\begin{align*}
\cL_n^\text{comp}(\theta) &= \cL_n^\text{comp}(\theta ; \cD_n, \textbf{\textit{b}}) \\
&= \sum_{i=1}^n 
-\dfrac12 \big(n_i \log 2\pi + \log \det(\Sigma_i) + (Y_i - M_i)^\top \Sigma_i^{-1}(Y_i - M_i)\big)\\
& \quad \quad \quad -\dfrac12 \big(r \log 2\pi + \log \det(D) + b_i^\top D^{-1}b_i\big).
\end{align*}
  
\textit{E-step.}
Supposing that we are at step $w + 1$ of the algorithm, with current iterate denoted $\theta^{(w)}$, we need to compute the expectation of the negative complete log-likelihood conditional on the observed data and the current estimate of the parameters, which is given by 
\[\cQ_n(\theta, \theta^{(w)}) = \E_{\theta^{(w)}}[\cL_n^\text{comp}(\theta) | \cD_n].\]
Here, the calculation of this quantity is reduced to the calculation of $\E_{\theta^{(w)}}[b_i | Y_i]$ and $\E_{\theta^{(w)}}[b_ib_i^T | Y_i]$ for $i = 1,\ldots,n$.
The marginal distributions of $Y_i$ and $b_i$ being both Gaussian, one has from Bayes Theorem
\[ f(b_i|Y_i; \theta^{(w)}) \propto \exp \big( -\dfrac12 (b_i - \mu_i^{(w)})^\top {\Omega_i^{(w)}}^{-1} (b_i - \mu_i^{(w)}) \big)\]
where
\[\Omega_i^{(w)} = ({V_i}^\top {\Sigma_i^{(w)}}^{-1}V_i + {D^{(w)}}^{-1})^{-1} \quad \text{ and } \quad \mu_i^{(w)} = \Omega_i^{(w)}{V_i}^\top {\Sigma_i^{(w)}}^{-1}(Y_i - U_i\beta^{(w)}).\]
Then, one has
\begin{center}
  $\left\{
    \begin{aligned}
      &\E_{\theta^{(w)}}[b_i | Y_i] = \mu_i^{(w)}, \\
      &\E_{\theta^{(w)}}[b_ib_i^T | Y_i] = \Omega_i^{(w)} + \mu_i^{(w)}{\mu_i^{(w)}}^\top.
    \end{aligned}
    \right.$
\end{center}

\textit{M-step.}
Here, we need to compute \[\theta^{(w+1)} \in \argmin_{\theta \in \R^P} \cQ_n(\theta, \theta^{(w)}) .\]
The parameters updates are then naturally given in closed form by zeroing the gradient. One obtains
\[\beta^{(w+1)} = \Big(\sum_{i=1}^n U_{i}^\top U_{i}\Big)^{-1}  \Big(\sum_{i=1}^n U_{i}^\top  Y_i - U_{i} V_{i} \E_{\theta^{(w)}}[ b_i | Y_i] \Big),\]
\begin{align*}
\phi_\ell^{(w+1)} = \Big(\sum_{i=1}^n n_i^\ell \Big)^{-1}  \Big(\sum_{i=1}^n \big(&Y_i^\ell - U_i^\ell\beta_\ell^{(w+1)}\big)^\top \big(Y_i^\ell - U_i^\ell\beta_\ell^{(w+1)} - 2V_i^\ell\E_{\theta^{(w)}}[b_i^\ell | Y_i^\ell]\big) \\
&+\Tr\big({V_i^\ell}^\top V_i^\ell \E_{\theta^{(w)}}[ b_i^\ell b_i^{\ell\top} | Y_i^\ell]\big) \Big)
\end{align*}
and \[D^{(w+1)} = n^{-1} \sum_{i=1}^n \E_{\theta^{(w)}}[ b_i {b_i}^\top | Y_i].\]

\subsubsection{Implementation of multivariate linear mixed model}
\label{sec:MLMM class}

We implement an EM algorithm for fitting a multivariate linear mixed model used to initialize parameters of longitudinal submodel in Algorithm~\ref{alg:ext-EM}. Let us introduce the list $\Omega^{(w)} = [\Omega_1^{(w)}, \ldots, \Omega_n^{(w)}]$, the matrices $\mu = [\mu_1, \ldots, \mu_n] \in \R^{r \times n}$, $U^\ell = {[{U_1^\ell}^\top \cdots {U_n^\ell}^\top]}^\top \in \R^{n_\ell \times q_\ell}$, $U = {[U_1^\top \cdots U_n^\top]}^\top \in \R^{\cN \times q}$,
\[ V^\ell = \begin{bmatrix}
    V_1^\ell & \cdots & 0\\
    \vdots &  \ddots & \vdots \\
    0 & \cdots & V_n^\ell
\end{bmatrix},
\quad
V = \begin{bmatrix}
  V_{1} & \cdots & 0\\
  \vdots &  \ddots & \vdots \\
  0 & \cdots & V_{n}
\end{bmatrix},
\quad
{\Omega^\ell}^{(w)} = \begin{bmatrix}
    {\Omega_{1}^\ell}^{(w)} & \cdots & 0\\
    \vdots &  \ddots & \vdots \\
    0 & \cdots & {\Omega_{n}^\ell}^{(w)}
\end{bmatrix}\]
that belong respectively in $\R^{n_\ell \times nr_\ell}$, $\R^{\cN \times nr}$ and $\R^{nr_\ell \times nr_\ell}$, and the vectors $\tilde \mu^{(w)} = ({\mu_1^{(w)}}^\top \cdots {{\mu_n^{(w)}}^\top)}^\top \in \R^{nr}$, $(\tilde \mu^\ell)^{(w)} = \big({(\mu_1^\ell)^{(w)}}^\top \cdots {(\mu_n^\ell)^{(w)}}^\top \big)^\top \in \R^{nr_\ell}$, $y^\ell = ({y_1^\ell}^\top  \cdots {y_n^\ell}^\top)^\top \in \R^{n_\ell}$
with $n_\ell = \sum_{i=1}^{n} n_i^\ell$ and $y = (y_1^\top \cdots y_n^\top)^\top \in \R^\cN$ with $\cN = \sum_{i=1}^{n} n_i$.
The $\beta$ update then rewrites
\[\beta^{(w+1)} = ({U}^\top U)^{-1}{U}^\top(y-V \tilde \mu^{(w)}).\]
For the $D$ update, one has
\[D^{(w+1)} = n^{-1} \big( \text{sum}(\Omega^{(w)}) + \mu^{(w)}{\mu^{(w)}}^\top\big).\]
And finally for the $\phi$ update, one has
\begin{align*}
\phi_\ell^{(w+1)} = n_\ell^{-1} \big[&(y^\ell - U^\ell{\beta_\ell^{(w+1)})}^\top \big(y^\ell - U^\ell\beta_\ell^{(w+1)} - 2 V^\ell (\tilde\mu^\ell)^{(w)}\big) \\
&+ \text{Tr}\big\{{V^\ell}^\top V^\ell \big({\Omega^\ell}^{(w)} + (\tilde\mu^\ell)^{(w)} {(\tilde\mu^\ell)^{(w)}}^\top\big)\big\} \big].
\end{align*}
In our implementation, these parameters are initialized with univariates fits using the function \texttt{mixedlm} (linear mixed effects model) in the \texttt{Python} package \texttt{statsmodels}.

\subsection{Details of the simulation setting}
\label{sec:data-simu}
We assume that each of the $n$ subjects belongs to one of two different profiles: high-risk and low-risk. 
Let us denote by $\mathcal{H} \subset \{1, \dots, n \}$ the set of high-risk subjects. For generating the time-independent features matrix of subject $i$, we take 
\[X_i \in \R^{p} \sim 
  \begin{cases}
    \cN \big(\mu, \bSigma_1(\rho_1)\big) &\text{if } i \notin \cH, \\ 
    \cN \big(-\mu, \bSigma_1(\rho_1)\big) &\text{if } i \in \cH,
  \end{cases}
 \]
where the mean $\mu$ corresponds to the gap between time-independent features of high-risk subjects and low-risk subjects, and $\bSigma_1(\rho_1)$ a $p \times p$ Toeplitz covariance matrix~\citep{mukherjee1988some} with correlation $\rho_1 \in (0, 1)$, that is, $\bSigma_1(\rho_1)_{jj'} = \rho_1^{|j - j'|}$. In order to simulate the class $g_i$ for each subject $i$, we choose a sparse coefficient vector where we decide to keep only $\bar{p}$ active features, that is
\begin{equation}
  \label{eq:xi-sparse}
  \xi = (\nu,\ldots,\nu ,0,\ldots,0) \in \R^p,
\end{equation} 
with $\nu\in\R$ being the value of the active coefficients. Then, we generate $g_i \sim \cB\big(\pi_\xi(X_i)\big)$, where $\cB(\alpha)$ denotes the Bernoulli distribution with parameter $\alpha \in [0,1]$ and 
\begin{equation*}
    \pi_\xi(X_i) = \dfrac{e^{X_i^\top\xi}}{1 + e^{X_i^\top\xi}}
\end{equation*}

Now, concerning the simulation of longitudinal markers, the idea is to sample from multivariate normal distributions. Moreover, we want to induce sparsity in the longitudinal data to reduce correlation between longitudinal features in each class $k$. We denote by $\cS_k$ the set of active longitudinal features in class $k$, which is randomly selected from the set $\{1, \dots, L \}$. Then, we simulate longitudinal features of the form 
\[ Y_i^\ell(t) = \sum_{k=1}^{K} \ind{\{g_i=k\}} \Big( \big((1, t)^\top \beta_k^\ell + (1, t)^\top b_i^\ell\big) \ind{\{\ell \in \cS_k\}} + \varepsilon_i^\ell(t) \Big) \]
where $t \geq 0$, the error term $\varepsilon_i^\ell(t) \sim \cN(0, \sigma_\ell^2)$, the global variance-covariance matrix for the random effects components is such that $D = \bSigma_2(\rho_2)$, a $r \times r$ Toeplitz covariance matrix with correlation $\rho_2 \in (0, 1)$, and the fixed effect parameters are generated according to 
\[ \beta_k^\ell \sim \cN\Big( 
\mu_k, 
\begin{bmatrix}
  \rho_3 & 0\\
  0 & \rho_3
\end{bmatrix} 
\Big) \]
for $k \in \{1, 2\}$ and with correlation $\rho_3 \in (0, 1)$. The number of observations for each subject is randomly selected from 1 to 10, and the measurement times are simulated from a uniform distribution with mininmum zero and maximum its survival time.

Now to generate survival times,  we choose a risk model with
\begin{equation}
  \lambda_i(t \, | \, g_i = k) = \lambda_0(t) \exp \Big(\sum_{\ell=1}^L \Psi_{i, k}^\ell(t){\gamma_k^\ell} \Big),
\end{equation}
We choose a Gompertz distribution~\citep{gompertz1825xxiv} for the baseline, that is
\begin{equation}
  \label{eq:baseline}
  \lambda_0(t) = \kappa_1 \kappa_2 \exp(\kappa_2t)
\end{equation}
with $\kappa_1 > 0$ and $\kappa_2 \in \R$ the scale and shape parameters respectively, which is a common distribution choice in survival analysis~\citep{klein2005survival} with a rich history in describing mortality curves. For the choice of the association features, we consider the two functionals in form of random effects linear predictor $m_k^\ell(t)$~\citep{chi2006joint} and random effects $b^\ell$~\citep{hatfield2011joint}, that is $\Psi_{i, k}^\ell(t) = \big(\beta_{k,1}^\ell + \beta_{k,2}^\ell t + b_{i,1}^\ell + b_{i,2}^\ell t\ , b_i^{\ell\top} \big)^\top$ and $\gamma_k^\ell = \nu_k \ind{\{\ell \in \cS_k \}}$
 where $\nu_k \in \R$ is the coefficients of active association features for each group $k$. Then one can write
\[ \lambda_i(t|g_i = k) = \lambda_0(t) \exp ( \iota_{i,k,1} + \iota_{i,k,2} t ),\]
being a Cox model with a linear relationship between time-varying feature and log hazard that allows the following explicit survival times generation process. One can now generate survival times explicitly, via the inversion method, as
\begin{equation}
  \label{eq:time-generation}
  T_i^\star | g_i=k \sim \dfrac{1}{\iota_{i,k,2} + \kappa_2} \log \Big(1 - \dfrac{(\iota_{i,k,2} + \kappa_2) \log U_i}{\kappa_1 \kappa_2 \exp\iota_{i,k,1}} \Big)
\end{equation}
where $U_i \sim \cU\big([0,1]\big)$, see~\cite{austin2013correction}.
The distribution of the censoring variable $C_i$ is the geometric distribution $\cG(\alpha_c)$, where $\alpha_c \in (0, 1)$ is empirically tuned to maintain a desired censoring rate $r_c \in [0,1]$. 
The choice of all hyper-parameters is driven by the applications on simulated data presented in Section~\ref{sec:simu-1} in the main paper, and summarized in Table~\ref{table:parameters choice}. 

\begin{table}[htb]
\caption{Hyper-parameter choices for simulation with $n = 500$, $L = 5$ and $p=10$.}
\label{table:parameters choice}
\centering
\begin{tabular}{cccccccccccc}
\toprule
$|\cH|$ & $|\cS_k|$ & $(\rho_1, \rho_2, \rho_3)$ & $\mu$ & $\mu_1$ & $\mu_2$ & $\sigma_\ell^2$ & $(\kappa_1, \kappa_2)$ & $(\nu, \nu_1, \nu_2)$ & $r_c$ & $\bar{p}$\\
\midrule
200 & 2 & $(0.5, 0.01, 0.01)$ & 1 & 
$\begin{pmatrix}
  -0.6 \\ 0.1
\end{pmatrix}$ & 
$\begin{pmatrix}
  0.05 \\ 0.2
\end{pmatrix}$ 
& 0.25 & $(0.05, 0.1)$ & (0.2, 0.1, 0.4) & 0.3 & 5\\
\bottomrule
\end{tabular}
\end{table}

\subsection{Description of the datasets used in comparison study}
\label{sec:compared_data}
\paragraph{JoineRML simulation}
We use the classical R package \texttt{joineRML} \citep{hickey2018joinerml} to simulate multivariate longitudinal and time-to-event data from a joint model. The multivariate longitudinal features are generated for all possible measurement times using multivariate Gaussian linear mixed model. Failure times are simulated from proportional hazards time-to-event models. We sample two time-independent features and two longitudinal features for 250 individuals. 

\paragraph{PBCseq dataset} This dataset which is available in the R package \texttt{JMbayes} \citep{2017_JMBayes}, contains the follow-up of 312 patients with primary biliary cirrhosis, a rare autoimmune liver disease. Several longitudinal features are measured over time (for example serum bilirubin, serum cholesterol, albumin), along with information on gender, age, and drug used recorded once at the beginning of the study. Time-to-event is also recorded with a censoring rate of 55\%.

\paragraph{Aids dataset} This dataset which is available in the R package \texttt{JMbayes} \citep{2017_JMBayes}, compares the efficacy and safety of two drugs for 467 patients diagnosed with HIV who were either intolerant or resistant to zidovudine therapy. Information on gender, age, drug used, AIDS infection status, and level of intolerance to zidovudine is collected at the start of the study. The longitudinal feature of interest here is the measurement of the number of CD4 cell (a type of white blood cell), a laboratory test used to understand the progression of HIV disease. Time-to-event is also recorded with a censoring rate of 40\%.
\subsection{Procedure to evaluate model performance}
\label{sec:evaluation-procedure}

Let us describe now in Algorithm~\ref{alg:evaluation} the procedure we follow to evaluate model performance on simulated data and real data described in Section~\ref{sec:experiment} in the main paper.

\begin{algorithm}
\caption{Procedure followed to assess performances of a given model in our real-time prediction paradigm.} \label{alg:evaluation}
\begin{algorithmic}[1]
    \Input Dataset $\cD_n$; a \texttt{model} under study
    \Output Confidence intervals on C-index metric as well as on running time.
    \State We run $K^{\text{iter}}=50$ independent experiments
    \For{$k = 1,\ldots,K^{\textnormal{iter}}$}
        \State $\texttt{start\_time} = \texttt{time}()$
        \State $(\cD_{n_{\text{train}}}, \cD_{n_{\text{test}}}) \leftarrow \texttt{split\_train\_test}(\cD_n)$
        \State \texttt{model.fit}$(\cD_{n_{\text{train}}})$ 
        \For{$i = 1,\ldots,n_{\text{test}}$}
            \State $s_i \sim \underset{\ell \in \{1, \dots, L \} }{\max}(t_{in_i}^\ell) \times \big(1 - \text{Beta}(2, 5)\big)$ \State $Y_i \leftarrow \big(Y_i^\ell(t_{ij}^\ell) \big)_{\substack{j\in \{1, \ldots, n_i^\ell-1\} \\ \ell \in \{1, \ldots, L\} \ \ \ }} \text{ with } t_{ij}^\ell \leq s_i$ 
            \State $\mathcal{X}_i = (X_i, Y_i)$
            \State $\widehat \cR_i^{k}(s_i) \leftarrow \texttt{model.predict}(\mathcal{X}_i)$
        \EndFor
        \State $\texttt{score}_{k} \leftarrow \texttt{c\_index}\big((\widehat \cR_i^{k}(t_i), T_i, \Delta_i)_{i=1,\ldots,n_{test}} \big)$ 
        \State $\texttt{end\_time} = \texttt{time}()$
        \State $\texttt{running\_time}_{k} = \texttt{end\_time} - \texttt{start\_time}$
    \EndFor
    \State \textbf{Return} {$\hat \theta = \theta^{(w+1)}$}
\end{algorithmic}
\end{algorithm}

\textit{Screening phase.} We use the multivariate Cox model and C-index metric for selecting the $M$ most important features from a specific set of feature extraction functions $\cF$. For each feature in $\cF$, we extract the $n \times L$ matrix from all $L$ longitudinal markers of all $n$ subjects. Then we fit this extracted matrix with all the survival times $T$ and censoring indicators $\Delta$ in the Cox model and use the C-index metric to evaluate the performance. Finally, we select $M$ features have the highest C-index values.

\subsection{Interpretation of the model on medical datasets}
\label{sec:medical}
Tables \ref{tab:coef_PBC} and  \ref{tab:coef_Sepsis} provide the estimated coefficients of FLASH on the PBCseq and Sepsis datasets. Coefficients are organized as follows: first, the time-independent parameters $\xi$, then the coefficients corresponding to the association features of the low-risk group $\gamma_{1,m}^\ell$ and finally the ones of the high-risk group $\gamma_{2,m}^\ell$. The initial values of the longitudinal markers are also considered as time-independent features in these experiments. Since each longitudinal marker is associated to a vector of association features, what we call ``coefficient'' is actually the Euclidean norm of the coefficients associated to these association features, that is, the norm of $(\gamma_{k,1}^\ell, \ldots, \gamma_{k,M}^\ell)$. In addition, to obtain standard errors for the coefficient estimates, we use a Bootstrap approach adapted to the presence of a Lasso penalty following \citet{chzhen2019lasso}. We first run the model with the Lasso penalty to get the support of estimated coefficients. We then rerun the model 10 times without the Lasso penalty on bootstrap samples with only features whose coefficients are in the support of the first run.

\begin{table}[h]
\centering
\caption{Estimated coefficients with standard errors}
\subfloat[PBCseq dataset]{
\label{tab:coef_PBC}
\begin{tabular}[b]{cc} 
 \toprule
 \textbf{Features} & \textbf{Coefficient} \\ 
 \midrule
 Drug & 0.0 \\
 Age & 0.0 \\ 
 Sex & 0.3459 ± 0.0037 \\ 
 Initial value of Serbilir & 0.3476 ± 0.0091 \\ 
 Initial value of Albumin & 0.0\\ 
 Initial value of SGOT & 0.3664 ± 0.0069\\
 Initial value of Platelets & 0.0 \\
 Initial value of Prothrombin & 0.3587 ± 0.0088\\
 Initial value of Alkaline & 0.332 ± 0.0042\\
 Initial value of SerChol & 0.2978 ± 0.0093\\
 \midrule
 Serbilir & 0.0341 ± 0.0161 \\ 
 Albumin & 0.0833 ± 0.0088\\ 
 SGOT & 0.0 \\
 Platelets & 0.0325 ± 0.0159\\
 Prothrombin & 0.0819 ± 0.0129\\
 Alkaline & 0.0807 ± 0.0092\\
 SerChol & 0.0 \\
 \midrule
 Serbilir & 0.0 \\ 
 Albumin & 0.0\\ 
 SGOT & 0.0 \\
 Platelets & 0.0 \\
 Prothrombin & 0.0 \\
 Alkaline & 0.2788 ± 0.0355\\
 SerChol & 0.0\\
 \bottomrule
\end{tabular}
}
\qquad
\subfloat[Sepsis dataset]{
\label{tab:coef_Sepsis}
\begin{tabular}[b]{cc} 
 \toprule
 \textbf{Features} & \textbf{Coefficient} \\ 
 \midrule
 Age & 0.152 ± 0.0414\\
 Gender & 0.0 \\
 Temp & 0.0 \\
 DBP & 0.0669 ± 0.1234\\
 BaseExcess & 0.2779 ± 0.0404\\
 HCO3 & 0.0 \\
 pH & 0.0 \\
 PaCO2 & 0.119 ± 0.0987\\
 BUN & 0.0 \\
 Calcium & 0.0 \\
 Chloride & 0.0 \\
 Creatinine & 0.0 \\
 Glucose & 0.0 \\
 Magnesium & 0.0 \\
 Phosphate & 0.0 \\
 Potassium & 0.0 \\
 Hct & 0.0 \\
 Hgb & 0.0 \\
 PTT & 0.0 \\
 WBC & 0.0 \\
 Platelets & 0.0 \\
 Initial value of HR & 0.0 \\
 Initial value of O2Sat & 0.1325 ± 0.1049\\
 Initial value of SBP & 0.6485 ± 0.0788 \\ 
 Initial value of Resp & 0.7821 ± 0.1001\\ 
 \midrule
 HR & 0.0 \\ 
 02Sat & 0.0\\ 
 SBP & 0.1111 ± 0.0211\\
 Resp & 0.1016 ± 0.0176\\
 \midrule
 HR & 0.0198 ± 0.0051\\ 
 02Sat & 0.0783 ± 0.0066\\ 
 SBP & 0.0849 ± 0.0064\\
 Resp & 0.0809 ± 0.0056\\
 \bottomrule
\end{tabular}
}
\end{table}

Note that with this approach the errors of estimation of the coefficients that are not in the support of the first run are then not evaluated. In other words, we do not evaluate the stability of our variable selection approach. To do this, we could follow the procedure of \citet{bach2008bolasso} who suggest running several runs of Lasso on a bootstrap sample, and then looking at the different support of the coefficients.

\subsection{Experiments on a high-dimensional dataset}
\label{sec:NASA}
We evaluate the performance of FLASH model with a challenging high-dimensional dataset from NASA, which is available at \texttt{https://data.nasa.gov}. This dataset describes the degradation of 200 aircraft engines, where 17 multivariate longitudinal features are measured for each different aircraft engine until its failure. There are also three operational settings that significantly affect engine performance. Note that we only apply FLASH to this dataset because the other models did not converge after running for one day, highlighting the fact that they do not scale to high-dimensional settings.

\begin{figure}[!htb]
    \centering
    \includegraphics[scale=.4]{./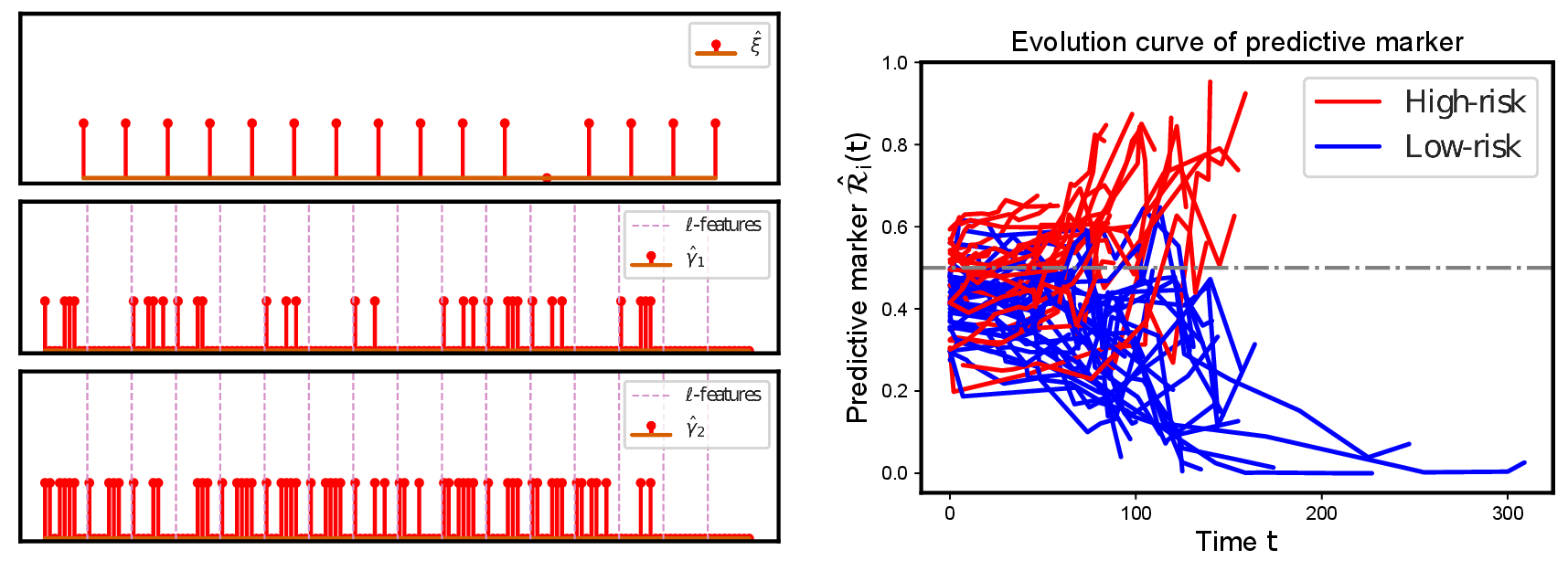}
    \caption{NASA dataset results. Left: in red the support of the estimated coefficient $\hat{\xi}$ ab $\hat{\gamma}_k$ for $k \in \{1, 2 \}$, the dashed pink lines separate the features corresponding to each longitudinal marker $\ell$. Right: the evolution curves of the predictive marker $\widehat \cR_i(t)$ for varying times $t$ and each subject where we well separate the subjects in the high risk group (in red color) and low risk (in blue color) based on their predictive marker at last measurement time with a threshold 0.5 represent by horizontal dashed line.}
    \label{fig:NASA_perf}
\end{figure}

We illustrate in Figure~\ref{fig:NASA_perf} the results obtained by FLASH. In the left panel, we can see the effect of regularization where the coefficients learned by the model are sparse and some longitudinal markers are entirely discarded. In particular, five longitudinal markers are excluded for the first group $k=1$ but not for the group $k=2$ while the last two markers are never selected. In the right panel, we show the evolution in time of the predictive marker for each subject. We can see that, as time passes, more data is observed and the subjects are better separated into two groups of different risks.

\subsection{Experiments on using signatures as association functions}
\label{sec:sig_transform}
\paragraph{Signature transform}
The association features can be extracted by a signature transformation. We refer to \citet{fermanian2021embedding} for a detailed presentation of this transformation and simply recall here its definition. Let $I = (\ell_1,\dots,\ell_k)$ be a word of size $k$ from the alphabet $\{1,\dots,L\}^k$. The signature associated to $I$ is defined as the mapping
    \begin{align*}
        t \mapsto \mathbf{S}^I(\cY_i(t^{-})) :=  \int_{0 < u_1 < \dots < u_k < t} dy_i^{\ell_1}(u_1) \dots dy_i^{\ell_k}(u_k).
    \end{align*}
The signature of depth $N$ is defined as the vector $$\mathbf{S}_N(\cY_i(t^{-})) = \Big(\mathbf{S}^I (\cY_i(t^{-}))\Big)_{|I| \leq N}.$$ 
It is a transformation from a multivariate longitudinal marker to a sequence of coefficients, that are independent of time parameterization and encodes geometric properties (for example, the second order coefficients correspond to areas). It is therefore a very different transformation from the ones used in the \texttt{tsfresh} package, in particular because it encodes information on interactions between coordinates, whereas \texttt{tsfresh} focuses on univariate features. The truncation depth $N$ is an hyperparameter that can be selected typically by cross-validation.

\paragraph{Results} Figure~\ref{fig:flash_sig} shows the performance of FLASH with signatures as association features compared with the performance of the model with the feature extracted from \textit{tsfresh} and the two competing methods on the four datasets presented in the article. The prediction performance of the model with the signature transformation is comparable to the competing methods and better for the FLASH\_simu dataset, and its computation cost is reduced since it does not require to implement a screening phase procedure to select the top association features.

\begin{figure}[ht]
    \centering
    \includegraphics[scale=.4]{./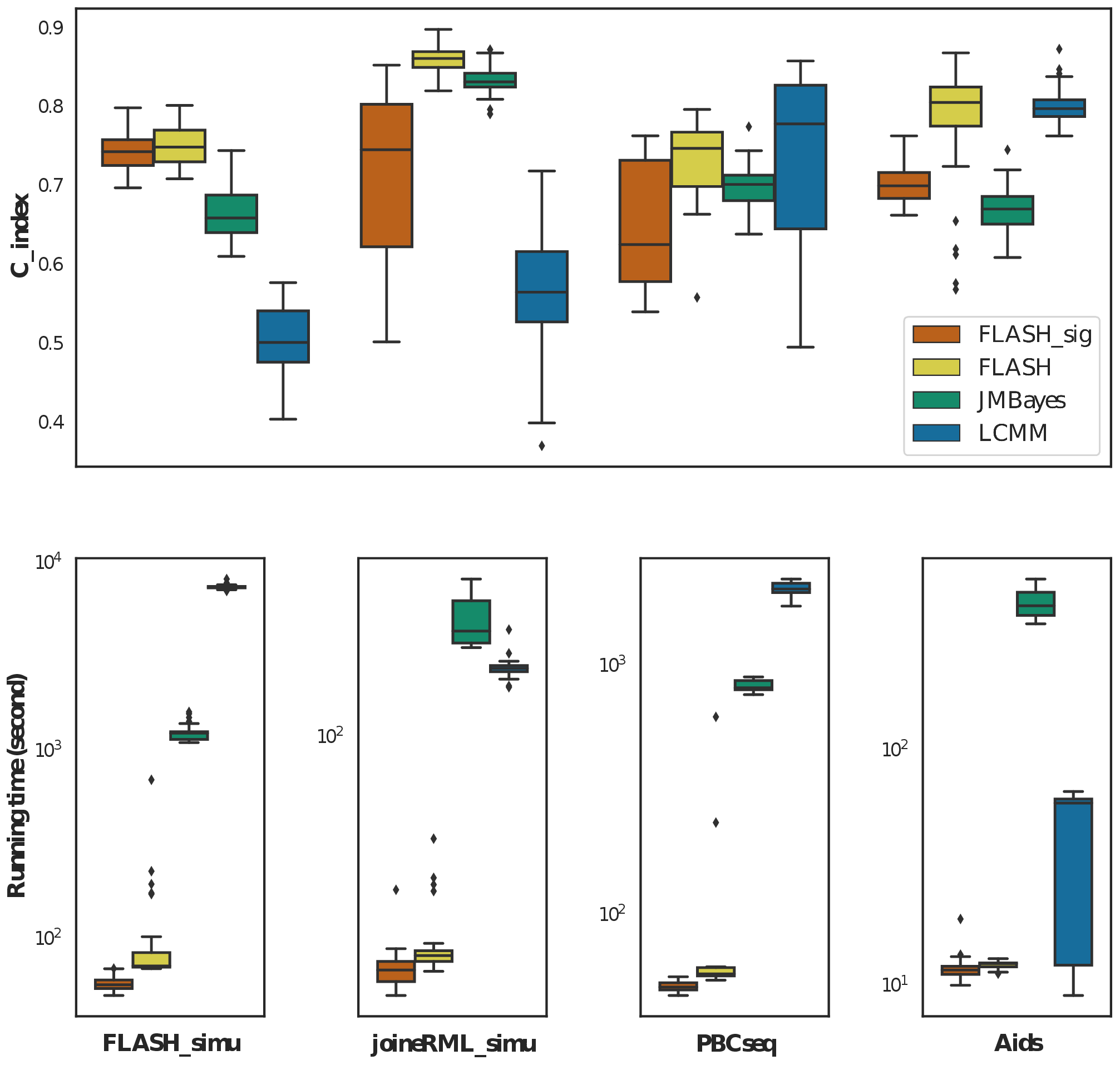}
    \caption{The performance of the Flash model with the signature transform compares with the competing methods.}
    \label{fig:flash_sig}
\end{figure}

\subsection{Procedure to select the optimal number of latent groups}
\label{sec:K_sel}

The optimal number of latent groups, $K$, is selected based on the Bayesian information criterion (BIC)~\citep{hastie2009elements}, which is defined as

\begin{equation*}
    \text{BIC}(K) = -2\hat{\cL}_n^\text{pen}(\theta; K) + \log(n) K,
\end{equation*}
where $\hat{\cL}_n^\text{pen}(\theta; K)$ is the optimal value of the likelihood function (defined in \eqref{eq:pen-log-lik} the main paper) with $K$ groups and $n$ is the number of subjects. The optimal $K$ is selected with the ``elbow method", that is, we pick the value of $K$ corresponding to the first large drop in the BIC values. For example, Figure \ref{fig:nb_K} shows the curve of the BIC values obtained with different $K$ on the PBCseq dataset. In this case, the optimal $K$ is 4.
\begin{figure}[ht]
    \centering
    \includegraphics[scale=.4]{./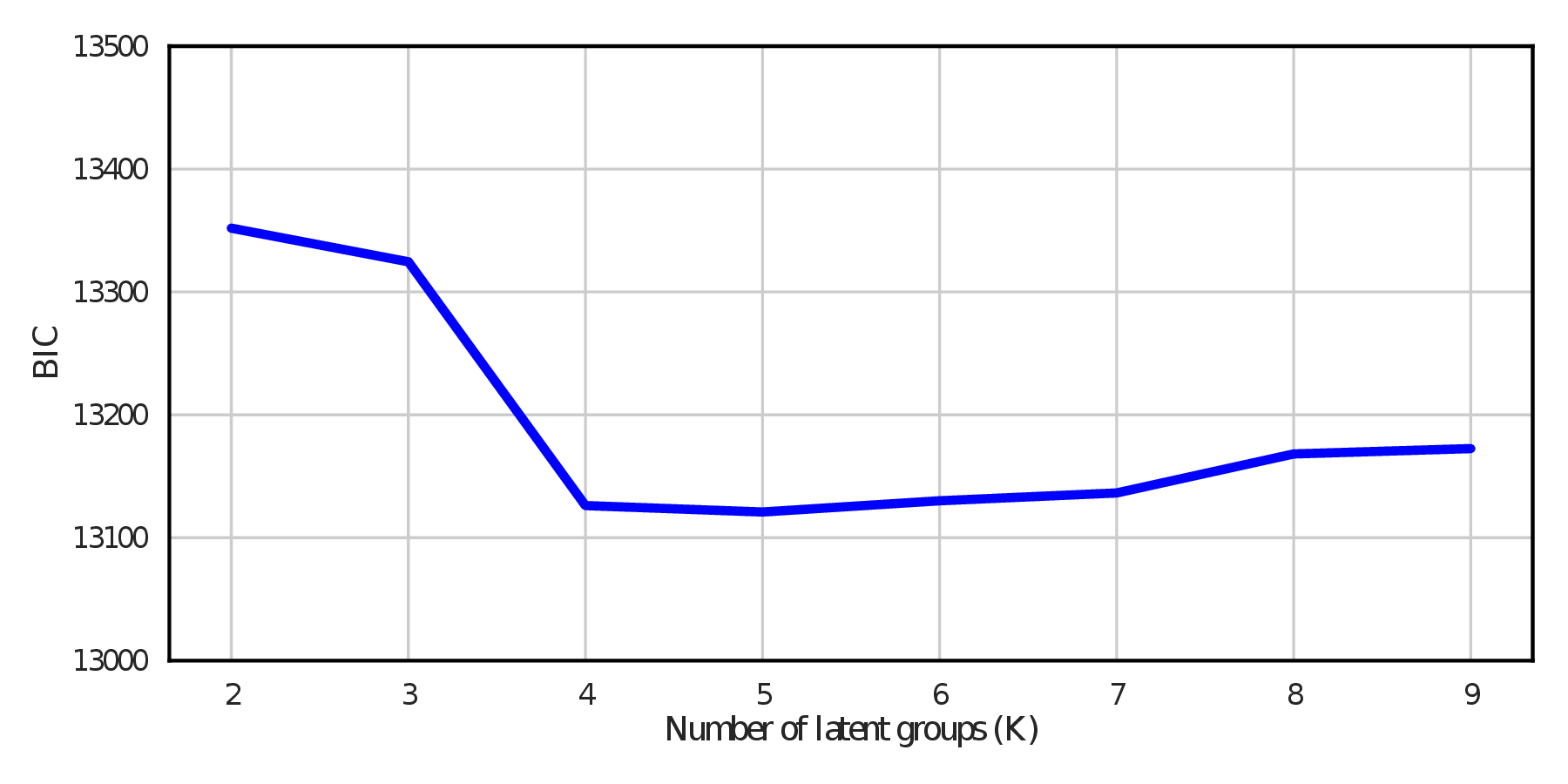}
    \caption{The BIC values on the PBCseq dataset.}
    \label{fig:nb_K}
\end{figure}

\subsection{Sensitivity to latent class assumptions}
\label{sec:sensity}
To assess the sensitivity of the method, we have run an additional simulation where we deviate from the assumption that the latent class is generated from~(\ref{eq:pi}) in the main paper. More precisely, we have used a probit model~\citep{aldrich1984linear}, where the probability of belonging to a class has the form:
  \[ \P(g_i=k)  = \Phi(X_i^\top \xi_k),\]
  where $\Phi$ is the cumulative distribution function of the standard normal distribution. In Figure~\ref{fig:probit} below, we compare the performance of our method on the dataset simulated in the ``well-specified'' setting, described in Section 5.1 of the paper, and on a dataset simulated in this misspecified setting.
   \begin{figure}[ht]
    \centering     \includegraphics[scale=.6]{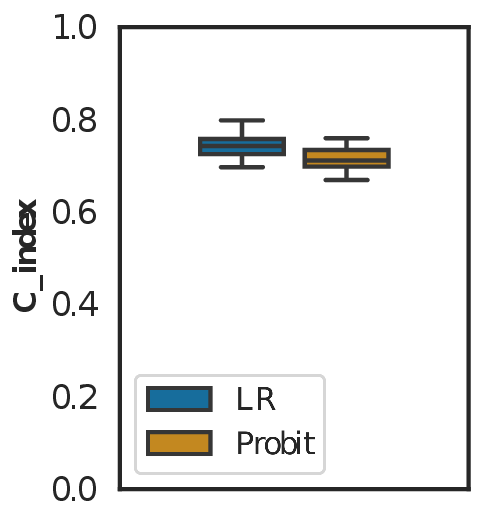}
    \caption{The performance of FLASH in terms of C-index, on two simulated datasets: the LR (Logistic Regression) dataset  is the one simulated with the model~(\ref{eq:pi}) in the main paper and the Probit dataset described above.}
    \label{fig:probit}
\end{figure}
  We can see that the performance is slightly better in the Logistic Regression case, which was to be expected, but that this difference is not significant. Therefore, our proposed method is not too sensitive to the assumption of the model on the probability of latent class membership.

\end{appendices}
\label{lastpage}

\end{document}